\documentclass[aps,prd,onecolumn,showpacs,preprintnumbers,amsmath,amssymb,12pt]{revtex4-2}

% Ensure UTF-8 encoding
%\usepackage[utf8]{inputenc}

% For handling figures
\usepackage{graphicx}

% Optional: Improve handling of accented characters
\usepackage[T1]{fontenc}

% Define the bibliography style (compatible with revtex4-2)
\bibliographystyle{apsrev4-2}

\usepackage{amsthm}

% set line space
\usepackage{setspace}
\setstretch{1.0} % For single spacing

%\usepackage{nameref}
% For hyperlinks (optional)
\usepackage{hyperref}
\hypersetup{
	colorlinks=true,
	linkcolor=blue,
	citecolor=blue,
	urlcolor=blue,
}

% Theorem Environments

\newtheorem{property}{Property}[section]
\newtheorem{proposition}{PROPOSITION}[section]

\begin{document}

\title{Melting through Barrier-Crossing: The Role of Equilibrium Thermally Activated Particles}

\author{Rongchao Ma}
\affiliation{Department of Physics, University of Alberta, Edmonton, Canada}
%\email{marongchao@yahoo.com}

\date{\today}

\begin{abstract}
Melting is often understood in purely equilibrium terms, where crystalline order disappears once the free energy of the solid equals that of the liquid.
Yet at the microscopic level, the initiating events for melting can often be traced to the formation of defects or local ``jumps'' over interatomic barriers.
In this work, we offer a unified interpretation of melting by focusing on the equilibrium fraction of particles whose energy exceeds a characteristic barrier \(E_a\).
We show that when this fraction surpasses a small but critical threshold \cite{Feder1958,Kraftmakher1998} (on the order of \(10^{-4}\)-\(10^{-3}\)), the crystal loses its rigidity, thus reconciling Born's mechanical-instability picture with the older Lindemann notion of large atomic displacements.
We derive this threshold condition from standard Boltzmann (and Bose/Fermi) statistics, ensuring consistency with standard thermodynamics.
Our approach naturally extends to vortex lattices in superconductors (where vortex activation energies play the role of \(E_a\)) and to quantum-lattice systems (Hubbard-type models). Crucially, while the interpretation emphasizes barrier crossing, the criterion itself is built on equilibrium statistical mechanics, offering a transparent link between defect formation rates and the macroscopic transition.
\end{abstract}

\pacs{64.70.D-, 74.25.Uv, 74.25.Wx, 05.30.Jp, 71.10.Fd}

\maketitle

\tableofcontents

\section{Introduction}

Melting is one of the most ubiquitous phase transitions in nature,
signifying the breakdown of long-range crystalline order as the temperature
(or another external parameter) changes.
From a strict thermodynamic standpoint, a first-order melting transition
occurs at the temperature and pressure where the free energy of the
crystal matches that of the liquid \cite{Ashcroft1976,Landau1980}.
Early discussions of melting criteria often adopted mechanical viewpoints:
Lindemann \cite{Lindemann1910,Frenkel1955} proposed that melting ensues once the mean-square atomic displacement reaches a critical fraction of the lattice spacing;
Born \cite{Born1939,Born1940} highlighted the vanishing of elastic moduli
as a hallmark of mechanical instability \cite{Zaccone2023,Zaccone2017,Zaccone2016,Zaccone2013,Zaccone2011};
and Orowan's dislocation-based framework \cite{Orowan1934} offered
another perspective focusing on defect-driven mechanisms.
These approaches, while influential, typically emphasize
amplitude-based or mechanical-stability arguments.

Despite considerable progress, a unifying microscopic picture that explicitly
links thermal excitations to the eventual loss of long-range order
remains elusive - particularly in systems where localized hopping events
or topological defects are central to the transition.
Modern numerical methods, such as molecular dynamics \cite{Kadau2002},
Monte Carlo simulations \cite{Manousakis1991,Sandvik2010,Sansone2008},
and density functional theory \cite{Seko2014,Seitsonen2016},
have provided rich insights but often lack a straightforward analytical
framework for predicting melting in novel or complex materials.

Nonetheless, because melting can be viewed as the proliferation of
defects, vacancies, or local disruptions that ultimately destroy the crystal,
researchers have long recognized a strong connection between
\emph{hopping events} over energy barriers and the onset of disorder
\cite{Frenkel1955, Kraftmakher1998}.
In an \emph{equilibrium} context, these barrier-crossing processes
are governed by the Boltzmann factor $\exp(-E_a/k_B T)$,
indicating the fraction of atoms (or vortices, or analogous excitations)
that occupy states above the barrier $E_a$.

\textbf{Goal and Key Claim.}
In this paper, we propose a melting criterion centered on the
\emph{equilibrium fraction of barrier-crossing particles}.
Rather than replacing the standard free-energy condition, our approach offers
a practical marker for when the crystal loses local stability:
once the fraction of atoms above the barrier $E_a$
exceeds a small threshold \cite{Feder1958,Kraftmakher1998} $\gamma \sim 10^{-3}$,
the crystal can no longer sustain long-range order.
We show that this threshold-based picture \emph{recovers} Lindemann's rule,
Born's mechanical-instability condition, and the usual vacancy concentrations
seen in real metals.
It thus remains \emph{consistent} with established thermodynamics
while recasting melting in a dynamic framework more closely aligned
with experimental intuitions about defects, hopping,
and local excitations \cite{Ashcroft1976,Landau1980,Kittel2005}.

\textbf{Outline.}
Section~\ref{MeltingCriterionForAClassicalLattice} outlines our derivation
in the classical limit, using a gamma-distribution approach to reveal
how the fraction of atoms above $E_a$ is dictated by equilibrium statistical mechanics.
We recover Lindemann's and Born's criteria as special cases.
Section~\ref{GeneralQuantumLattice} then extends these ideas to quantum lattices,
illustrating how a similar threshold emerges in
Debye- or Einstein-like solids \cite{Debye1912,Einstein1907}.
We further explore how vortex hopping over pinning barriers
initiates melting in superconductors' flux-line lattices,
and we examine the breakdown of a Mott insulator in Hubbard-type models
\cite{Hubbard1963,Gersch1963} through an analogous criterion involving
a critical ratio of hopping amplitude to on-site repulsion.

Throughout, the relevant ``hopping rates'' or ``activation events''
derive from \emph{equilibrium} distributions
(e.g., Boltzmann or Bose-Einstein/Fermi-Dirac),
thus preserving the equilibrium nature of the melting point.
In this way, our barrier-crossing perspective complements
the standard free-energy framework, bringing together
apparently disparate systems under a single principle
that highlights defects, hopping, and local excitations.

\section{Intuition of Atomic Motions and Lattice Melting}

To develop a melting theory, we must understand the physical principles that
govern the melting process. Recall that a crystalline lattice is stabilized
by interatomic forces. Below the melting point, atoms remain localized because
these forces balance one another. The atoms generally vibrate about their
equilibrium positions, giving rise to two distinct types of motion:
lattice-wave motions and atomic hopping between adjacent sites
\cite{Ashcroft1976,Landau1980,Kittel2005}.

\subsection{Normal (Wave) Motions}

Atoms in a crystalline lattice vibrate about their equilibrium positions in a
coordinated manner, producing collective excitations known as phonons
\cite{Ashcroft1976,Landau1980,Kittel2005,Narasimhan1991,Ziman2001}.
These wave-like motions do not destroy the lattice's long-range order and,
under typical conditions, do not contribute to melting. As temperature increases,
the energy of these normal modes also increases, leading to larger vibrational
amplitudes. However, this alone does not disrupt the lattice structure. In the
framework of lattice vibrations, these wave motions can be considered
``normal motions'' of the lattice.

\subsection{Abnormal (Hopping) Motions}

In contrast, thermal fluctuations can sometimes provide enough energy for
individual atoms to overcome potential barriers (activation energy),
causing them to hop from one equilibrium position to another
\cite{Anderson2017,Shewmon2016,Eyring1935,Mullins1957,Henkelman2000}.
Unlike wave motions, these localized hops do not propagate through the lattice
as a collective wave. Instead, they create defects or vacancies that disrupt
long-range order and directly destabilize the crystalline structure
\cite{Grimvall2012,Grabowski2009}. Consequently, in the framework of
lattice vibrations, these hopping motions are deemed ``abnormal motions'' 
of the lattice.

\subsection{Critical Hopping Rate and Lattice Melting}

It is now clear that the melting transition in a crystalline lattice is
primarily driven by atomic hopping motions, whereas wave-like vibrational
motions do not directly contribute. In a crystalline lattice, any atom found
away from its original position tends to destabilize the structure. To
maintain stability, as many atoms as possible must remain in their original
positions. However, thermal fluctuations allow some atoms to acquire enough
energy to exceed the activation threshold, causing them to leave their
equilibrium positions and hop to adjacent sites.

\begin{proposition}\label{PropositionCriticalHoppingRate}
	Consider a classical lattice. Let \(\nu_0\) represent the characteristic
	vibrational frequency of a crystalline lattice (e.g., the Debye frequency),
	\(\Gamma\) denote the atomic hopping frequency within the lattice, and
	\(E_a\) denote the activation energy (potential barriers) needed for an atom
	to hop to a neighboring site. Therefore:
	\begin{enumerate}
		\item Lattice melting occurs when \(\Gamma\) reaches a critical value
		\(\Gamma_c\), beyond which the system becomes dynamically unstable.
		The critical hopping frequency \(\Gamma_c\) is proportional to \(\nu_0\)
		and can be written as
		\begin{equation}\label{CriticalHoppingRateMeltingCriterion}
			\Gamma_c = \gamma \nu_0,
		\end{equation}
		where \(\gamma\) is a dimensionless constant such that \(0 < \gamma < 1\).
		
		\item The critical hopping rate parameter \(\gamma\) equals the fraction
		of atoms \(f(T)\) with energy \(E \geq E_a\) at the melting point \(T_m\):
		\begin{equation}\label{CriticalHoppingRateFractionOfAtoms}
			\gamma = f(T_m).
		\end{equation}
	\end{enumerate}
\end{proposition}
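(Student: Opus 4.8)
The plan is to derive both parts from two physical inputs: a transition-state (Arrhenius/Kramers) expression for the hopping rate, and a rigidity-loss argument for the existence of a critical mobile fraction.

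First I would write the per-atom hopping frequency in the Eyring form $\Gamma(T) = \nu_0\, P(E \ge E_a)$, where $\nu_0$ is the attempt frequency and $P(E \ge E_a)$ is the equilibrium probability that an atom's vibrational energy exceeds the barrier $E_a$. By definition this probability is exactly the fraction $f(T)$ in the statement, so $\Gamma(T) = \nu_0\, f(T)$ identically in $T$. In the classical harmonic limit $f(T)$ is the upper incomplete-gamma tail of the per-atom energy distribution (the gamma-distribution computation carried out in Section~\ref{MeltingCriterionForAClassicalLattice}); for $E_a \gg k_B T$ it reduces to the familiar Boltzmann weight $e^{-E_a/k_B T}$ up to an algebraic prefactor. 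Since $0 < T_m < \infty$ and $0 < E_a < \infty$, we have $0 < f(T) < 1$, and in particular $0 < f(T_m) < 1$.

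Next, for part~1, I would argue that there is a critical hopping frequency $\Gamma_c$ at which long-range order is lost. The key observations are that each hop produces a transient local defect whose anneal/return rate is set by the same vibrational dynamics, hence of order $\nu_0$; consequently the instantaneous fraction of atoms displaced from their sites in steady state is of order $\Gamma/\nu_0$. When this displaced fraction reaches the rigidity threshold $\gamma$ --- the constraint-counting value, equivalently the point where the shear modulus extrapolates to zero in Born's picture --- the lattice can no longer support a static shear stress and melts. Because $\nu_0$ is the only intrinsic frequency scale, dimensional analysis alone forces $\Gamma_c = \gamma\,\nu_0$ with $\gamma$ dimensionless; the physical content is that $\gamma$ is strictly positive and strictly less than $1$ (an atom cannot hop faster than it vibrates, and rigidity is lost well before that bound).

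Finally, for part~2, I would evaluate the identity of the first step at the melting point: the melting condition is precisely $\Gamma(T_m) = \Gamma_c$, while $\Gamma(T_m) = \nu_0 f(T_m)$ and $\Gamma_c = \gamma\,\nu_0$. Cancelling the common nonzero factor $\nu_0$ gives $\gamma = f(T_m)$, as claimed. I expect the main obstacle to be the rigidity step: justifying that the loss of rigidity occurs at a \emph{sharp} value of the mobile fraction rather than gradually. I would address this by appealing to rigidity-percolation / constraint-counting arguments and to the empirical near-constancy of vacancy concentrations at melting (\(\sim 10^{-4}\)--\(10^{-3}\)) across many metals, and by noting that the recovery of the Lindemann and Born criteria in Section~\ref{MeltingCriterionForAClassicalLattice} furnishes an independent consistency check on the value of $\gamma$.
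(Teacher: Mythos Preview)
Your proposal is correct and follows essentially the same route as the paper: both establish the identity $\Gamma(T) = \nu_0\, f(T)$ from the transition-state assumption that every atom with $E \ge E_a$ hops at the attempt frequency, posit a critical value $\Gamma_c = \gamma \nu_0$ at which the lattice loses stability, and then read off $\gamma = f(T_m)$ by equating the two at $T = T_m$. The only minor differences are ordering (you derive $\Gamma = \nu_0 f(T)$ first and then argue for $\Gamma_c$, whereas the paper does the reverse) and that you supply more explicit physical justification for the rigidity threshold via constraint-counting/percolation language, while the paper's argument at that step is more heuristic.
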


\begin{proof}
	As temperature increases, the hopping frequency \(\Gamma\) rises, increasing
	the likelihood of disruptive atomic displacements.
	\begin{enumerate}
		\item At the melting temperature \(T_m\), \(\Gamma\) reaches a critical
		value \(\Gamma_c\) (the melting hopping frequency), beyond which the
		lattice becomes dynamically unstable and long-range order is lost.
		Above \(T_m\), the atoms become delocalized, and the system transforms
		from a solid to a liquid - hallmarks of the melting phenomenon.
		
		Furthermore, \(\Gamma_c\) should be proportional to the lattice
		vibrational frequency \(\nu_0\), implying a close relationship between
		these two quantities. Once \(\Gamma\) reaches \(\Gamma_c\), the lattice
		melts, confirming that \(\Gamma_c\) is a key indicator of the melting
		transition.  \(\Gamma_c\) is a function of material properties and
		geometric structure (e.g., coordination number, interatomic potential
		strength, etc.).
		
		\item For a classical lattice, if we assume that each atom with energy
		\(E > E_a\) attempts to hop with frequency \(\nu_0\), then the hopping
		rate per atom is proportional to the fraction of atoms \(f(T)\) with
		energy \(E > E_a\), i.e.,
		\begin{equation}\label{HoppingRatePerAtom}
			\Gamma = f(T) \nu_0.
		\end{equation}
		Comparing \eqref{CriticalHoppingRateMeltingCriterion} and
		\eqref{HoppingRatePerAtom}, we see that the critical hopping rate
		parameter \(\gamma\) equals the fraction \(f(T_m)\) of atoms exceeding
		\(E_a\) at the melting point \(T_m\). Experimental data
		\cite{Feder1958,Kraftmakher1998} indicate that \(\gamma\) at the
		melting point typically ranges from \(10^{-4}\) to \(10^{-3}\).
	\end{enumerate}
\end{proof}

We can use a probabilistic interpretation to see how hopping motion
induces melting: \emph{Consider an atom \(A\) at position \(S\), vibrating
	with an attempt frequency \(\nu_0\) and a hopping frequency \(\Gamma\).
	Over \(\nu_0\) vibrational attempts, the atom can hop \(\Gamma\) times
	out of these \(\nu_0\). Thus, if we measure the presence of \(A\) at \(S\),
	the probability of \emph{not} finding it there is \(p = \Gamma / \nu_0\),
	while the probability of finding it there is \(1 - p\). As \(p\) increases,
	the lattice becomes more unstable. Eventually, when \(p\) reaches a
	critical value \(\gamma = \Gamma_c / \nu_0\), the lattice collapses and
	melts. For \(p > \gamma\), the atoms behave as though in a liquid state.}

\subsection{Critical Hopping Rate and Hubbard-like Lattice Melting}

In the above discussions, ``melting'' generally denotes the transition from
a crystalline solid to a liquid, driven by thermal fluctuations that help
atoms overcome potential barriers. In quantum lattices described by the
Hubbard model, however, ``melting'' may take other forms, depending on the
properties of the particles:

\begin{itemize}
	\item \textbf{Fermionic Lattices.} Here the lattice sites host fermions.
	Using the fermionic Hubbard model (Appendix~\ref{MeltingCriterionFermionicLattices})
	\cite{Hubbard1963} requires distinguishing between the physical lattice of
	ions and the electronic state. In a Mott insulator with antiferromagnetic
	order, the ions remain in a crystalline arrangement - so the literal lattice
	does not melt into a liquid. Instead, the long-range \emph{magnetic} order
	of the electrons ``melts.'' While we use the term ``melting'' by analogy to
	describe the vanishing of the solid-like Mott phase, the resulting phase is
	actually a quantum fluid, not a classical liquid. Nevertheless, the concept
	of a critical hopping rate (or tunneling amplitude) remains similar to the
	classical notion of defect-driven melting
	\cite{Georges1996,Imada1998,Kotliar2004,Schneider2012,Georges2013}.
	
	\item \textbf{Bosonic Lattices.} Here each site hosts bosons. Using the
	Bose-Hubbard model (Appendix~\ref{MeltingCriterionBosonicLattices}),
	we can study the quantum phase transition between a Mott insulator
	(where bosons are localized by strong on-site repulsion) and a superfluid
	(where bosons delocalize and form a coherent phase)
	\cite{Gersch1963,Fisher1989,Jaksch1998,Greiner2002,Gross2017,Koetsier2006,
		Buonsante2004,Jin2016,Alavani2018}.
\end{itemize}

In a Hubbard-like lattice, we use an analogous dimensionless ratio \(t/U\)
(or \(t z / U\) if the coordination number \(z\) is included), and quantum
``melting'' occurs near a critical value \(\left(\frac{t}{U}\right)_c\).
This ratio \(\left(\frac{t}{U}\right)_c\) directly parallels
\(\Gamma_c/\nu_0\) in Eq.~\eqref{CriticalHoppingRateFractionOfAtoms}.
Typically, \(\left(\frac{t}{U}\right)_c\) lies in the range
\(10^{-2}\)-\(10^{-1}\) for many 2D or 3D Hubbard systems.

\medskip

\textbf{Relation to equilibrium thermodynamics.}
It is important to emphasize that the hopping frequency \(\Gamma\)
is derived from an \emph{equilibrium} statistical treatment.
Concretely, when we compute \(\Gamma \propto \nu_0 \exp(-E_a / k_B T)\)
(or the analogous fraction of atoms exceeding a barrier,
\(f \approx e^{-E_a/k_B T}\)), we are using Boltzmann (or Bose/Fermi)
factors that follow from the \emph{equilibrium} partition function.
Thus, although we phrase the melting threshold in terms of a
``critical hopping rate'' \(\Gamma_c\), this rate is \emph{not} introduced
as an arbitrary kinetic guess; rather, it arises directly from
equilibrium thermodynamics.

Consequently, when we write ``melting occurs once \(\Gamma\) reaches
\(\Gamma_c\),'' we mean that at the melting temperature \(T_m\) (where
the solid loses stability), the \emph{equilibrium fraction} of
barrier-crossing particles reaches a small but critical threshold.
This criterion is therefore fully consistent with standard
free-energy-based arguments, merely re-expressed in
barrier-crossing language.

\medskip

Having qualitatively described atomic motions and melting transitions,
we now develop mathematical frameworks for classical and quantum lattices.
We then apply the critical hopping-rate criterion to these systems to
analyze their melting transitions. Finally, in Appendices~B and~C, we
explore mathematical frameworks for Hubbard-like lattices in detail.

\section{Melting Criterion for Classical Lattices}
\label{MeltingCriterionForAClassicalLattice}

In this section, we develop a mathematical framework to describe the melting of a classical lattice of particles based on a critical hopping rate that integrates concepts from thermal vibrations and atomic hopping dynamics. Our aim is to derive the energy distribution \( f(E) \) for a classical lattice and calculate the fraction \( f \) of atoms with energy \( E \geq E_a \) using the isothermal-isobaric (NPT) ensemble \cite{Landau1980,Pathria2021,Allen1987,Frenkel2002}. We will also demonstrate how this framework aligns with the traditional Lindemann criterion.

\subsection{Activation energy \( E_a \) of atoms in a crystalline lattice}

In crystalline lattices, the activation energy \( E_a \) is the energy barrier for a particle to hop to a neighboring site. We can absorb complex effects such as anharmonicity, thermal expansion, phonon interactions, and entropy changes into the activation energy \cite{BornHuang1954,Hirth1982,Sinko2002}. This simplifies the description of lattice vibrations and atomic hopping motions into a purely kinetic behavior of the lattice.

\begin{property}\label{ActivationEnergyGreaterThan0AtTm}
	The activation energy \( E_a(T) \) is greater than zero at the melting temperature \( T_m \):
	\begin{equation}\label{UneqZero}
		E_a(T_m) > 0.
	\end{equation}
\end{property}

\begin{proof}
	This inequality can be inferred from the critical hopping rate criterion. At the melting temperature \( T_m \), the lattice melts at a critical hopping frequency \( \Gamma_c \), implying that \( E_a(T_m) > 0 \). If \( E_a(T_m) = 0 \), atoms would freely hop with each vibrational attempt, corresponding to a completely disordered state, which is not the case at \( T_m \).
\end{proof}

\begin{property}\label{ActivationEnergyTemperature}
	The activation energy \( E_a \) of atoms in a crystalline lattice is a decreasing function of temperature \( T \):
	\begin{equation}\label{DecreasingEaOfTemperature}
		\frac{\partial E_a}{\partial T} < 0.
	\end{equation}
\end{property}

\begin{proof}
	Starting with the Gibbs free energy barrier \cite{Hirth1982,Sinko2002}:
	\begin{equation}\label{GibbsFreeEnergyBarrier}
		E_a(T) = \Delta G^*(T) = \Delta H^*(T) - T \Delta S^*(T)
	\end{equation}

	To analyze the temperature dependence, we formally expand \( \Delta H^*(T) \) and \( \Delta S^*(T) \) in Taylor series around \( T = 0 \) (or equivalently, a small \( T \) region):
	
	\begin{enumerate}
		\item \textbf{Activation Enthalpy:}	
		\[
		\Delta H^*(T) = \Delta H^*_0 + \left( \frac{\partial \Delta H^*}{\partial T} \right)_{0} T + \frac{1}{2} \left( \frac{\partial^2 \Delta H^*}{\partial T^2} \right)_{0} T^2 + \cdots
		\]
		
		The activation enthalpy \( \Delta H^*(T) \) decreases with increasing \( T \) due to bond weakening and anharmonic effects. Therefore, the first derivative:
		
		\[
		\left( \frac{\partial \Delta H^*}{\partial T} \right)_{0} < 0
		\]
		
		The rate at which \( \Delta H^*(T) \) decreases can itself change with temperature. Generally, the decrease accelerates at higher temperatures due to enhanced anharmonicity, so the second derivative:
		
		\[
		\frac{1}{2} \left( \frac{\partial^2 \Delta H^*}{\partial T^2} \right)_{0} < 0
		\]
		
		Define:
		\begin{equation}
			\alpha  = - \left( \frac{\partial \Delta H^*}{\partial T} \right)_{0} > 0, \quad
			\lambda = - \frac{1}{2} \left( \frac{\partial^2 \Delta H^*}{\partial T^2} \right)_{0} > 0
		\end{equation}
		
		Thus:	
		\[
		\Delta H^*(T) = \Delta H^*_0 - \alpha T - \lambda T^2 + \cdots
		\]

		\item \textbf{Activation Entropy:}	
		\[
		\Delta S^*(T) = \Delta S^*_0 + \left( \frac{\partial \Delta S^*}{\partial T} \right)_{0} T + \frac{1}{2} \left( \frac{\partial^2 \Delta S^*}{\partial T^2} \right)_{0} T^2 + \cdots
		\]
		
		The activation entropy \( \Delta S^*(T) \) typically increases with \( T \) because higher temperatures lead to greater disorder in the transition state. Therefore, the first derivative:		
		\[
		\left( \frac{\partial \Delta S^*}{\partial T} \right)_{0} > 0
		\]
		
		The increase in \( \Delta S^*(T) \) may accelerate with temperature due to complex interactions, so the second derivative:		
		\[
		\frac{1}{2} \left( \frac{\partial^2 \Delta S^*}{\partial T^2} \right)_{0} > 0
		\]
		
		Define:
		\begin{equation}
			\beta  = \left( \frac{\partial \Delta S^*}{\partial T} \right)_{0} > 0, \quad
			\gamma = \frac{1}{2} \left( \frac{\partial^2 \Delta S^*}{\partial T^2} \right)_{0} > 0
		\end{equation}
		
		Thus:	
		\[
		\Delta S^*(T) = \Delta S^*_0 + \beta T + \gamma T^2 + \cdots
		\]
		
		\item \textbf{Calculating \( E_a(T) \)}
		
		Substitute the expanded forms \( \Delta H^*(T) \) and \( \Delta S^*(T) \) into Eq.(\ref{GibbsFreeEnergyBarrier}):	
		\[
		E_a(T) = [ \Delta H^*_0 - \alpha T - \lambda T^2 ] - T [ \Delta S^*_0 + \beta T + \gamma T^2 ]
		\]
		
		Group like terms:
		\begin{equation}\label{DecreasingEaOfT}
			\boxed{E_a(T) = \Delta H^*_0 - (\Delta S^*_0 + \alpha) T - (\beta + \lambda) T^2 - \gamma T^3}
		\end{equation}	
		
		\item \textbf{Ensuring Positive Coefficients}
		
		\begin{itemize}
			\item \textbf{First-order term (\( T \)):}		
			\( \Delta S^*_0 \) (activation entropy at \( T = 0 \)) is generally positive because the transition state is more disordered than the initial state.
			\( \alpha \) is positive by definition.		
			Therefore, the coefficient \( \Delta S^*_0 + \alpha > 0 \).
			
			\item \textbf{Second-order term (\( T^2 \)):}		
			Both \( \beta \) and \( \lambda \) are positive by definition.		
			Therefore, the coefficient \( \beta + \lambda > 0 \).
			
			\item \textbf{Third-order term (\( T^3 \)):}		
			The coefficient \( \gamma \) is positive by definition.
		\end{itemize}

		Therefore, Eq.(\ref{DecreasingEaOfT}) clearly shows that \( E_a(T) \) decreases with increasing \( T \) because all the coefficients subtracted from \( \Delta H^*_0 \) are positive, i.e.,
		\[
		\frac{\partial E_a}{\partial T} = - (\Delta S^*_0 + \alpha) - 2 (\beta + \lambda) T - 3 \gamma T^2 < 0.
		\]
	\end{enumerate}
	
	\textbf{Physical Clarification:}
	The Taylor expansion assumes analytic behavior near \( T = 0 \).
	While certain thermodynamic quantities (e.g., enthalpy, entropy) may exhibit nonlinear behavior at very low temperatures (e.g., Debye \( T^3 \) laws for solids), this expansion is valid for a ``small \( T \)'' region.
	In practice, the trends deduced (\( \partial E_a / \partial T < 0 \)) remain consistent up to the melting temperature (\( T_m \)).
\end{proof}

\subsection{Joint Probability Density Function (PDF) Including Volume}

In the melting transition of a lattice system, the number of particles (\( N \)), pressure (\( P \)), and temperature (\( T \)) remain constant, but the phase (solid to liquid), volume (\( V \)), and energy (\( E \)) change. The energy levels and activation energy \( E_a \) may depend on the volume \( V \) due to thermal expansion and anharmonic effects. Therefore, the appropriate ensemble to describe the system is the isothermal-isobaric ensemble (\( NPT \) ensemble) \cite{Landau1980,Pathria2021,Allen1987,Frenkel2002}, which allows the volume to fluctuate to maintain constant pressure.

In an \( NPT \) ensemble, the joint PDF for \( E \) and \( V \) is:

\[
f_{E, V}(E, V) = \frac{1}{\Delta(N, P, T)} e^{ - \beta ( P V + E ) } \Omega(E, V),
\]
where \( \beta = \frac{1}{k_B T} \), \( \Omega(E, V) \) is the density of states for energy \( E \) at volume \( V \), and \( \Delta(N, P, T) \) is the isothermal-isobaric partition function:

\[
\Delta(N, P, T) = \int_{0}^{\infty} e^{ - \beta P V } Z(N, V, T) \, dV,
\]
with \( Z(N, V, T) \) being the canonical partition function at fixed \( N \), \( V \), and \( T \).

\subsection{Marginal Probability Density Function for Energy}

The marginal PDF for \( E \) is:

\[
f(E) = \int_{0}^{\infty} f_{E, V}(E, V) \, dV = \frac{ e^{ - \beta E } }{ \Delta(N, P, T) } \int_{0}^{\infty} e^{ - \beta P V } \Omega(E, V) \, dV.
\]

Assuming a separable density of states \( \Omega(E, V) = h(V) g(E) \), where \( h(V) \) is the volume-dependent part and \( g(E) \) is the energy-dependent part, the marginal PDF simplifies to:

\begin{equation}\label{CanonicalEnsembleEnergyDistribution2}
	f(E) = \frac{ g(E) e^{ - \beta E } }{ Z_E },
\end{equation}
where \( Z_E \) is the normalization constant for energy, independent of volume due to cancellation. This shows that under these assumptions, the energy distribution \( f(E) \) of an \( NPT \) ensemble remains the same as that of a canonical ensemble.

\subsection{Energy Distribution of Atoms}

As shown in Appendix \ref{AppendixHamiltonianQuasiHarmonicOscillators}, near the melting point, atoms can be considered as three-dimensional quasi-harmonic oscillators by incorporating the influences of anharmonicity, phonon interactions, and thermal expansion into the angular frequency of atomic vibrations and the activation energy of hopping motions. Each quasi-harmonic oscillator is decoupled and independent \cite{BornHuang1954,Alfe2003}.

Equation (\ref{CanonicalEnsembleEnergyDistribution2}) includes the partition function \( Z_E \) and density of states \( g(E) \). Calculating \( f(E) \) requires knowledge of the system's Hamiltonian, which can be complex. However, if the random variables of the system are independent, we can calculate \( f(E) \) using the properties of PDFs and convolution, avoiding the exact form of the Hamiltonian.

According to probability theory, the sum of independent random variables leads to convolution of their PDFs \cite{Reif2009,Papoulis2002,Boas2005}. For \( s \) independent degrees of freedom, the total energy is the sum \( E = E_1 + E_2 + \cdots + E_s \) and the joint PDF \( f(E) \) is the convolution of their individual PDFs $f_i(E_i)$:
$$
f(E) = f_1(E_1) * f_2(E_2) * f_3(E_3) * \cdots * f_s(E_s)
$$.

For a single degree of freedom, the normalized PDF is

\begin{equation}\label{EnergyDistributionSingleFree}
	f(E_1) = \beta e^{ - \beta E_1 }.
\end{equation}

The Laplace transform of the convolution of \( s \) identical PDFs is the \( s \)-th power of the individual Laplace transform. Applying the Laplace transform to \( f(E) \):
\[
\mathcal{L}[f](\lambda) = \left[\mathcal{L}[f_1](\lambda)\right]^s = \left(\int_{0}^{\infty} e^{ - \lambda E } \beta e^{ - \beta E } \, dE\right)^s = \left( \frac{ \beta }{ \beta + \lambda } \right)^s.
\]

This is the Laplace transform of a gamma distribution.
Applying the inverse Laplace transform reverts to the original domain, we obtain the PDF of total energy:

\begin{equation}\label{GammaDistribution}
	\boxed{f(T, E) = \frac{ \beta^s E^{s - 1} }{ (s-1)! } e^{ -\beta E }}.
\end{equation}

This shows that the sum of \( s \) independent exponential random variables with the same rate parameter \( \beta \) follows a gamma distribution with shape parameter \( k = s \) and scale parameter \( \theta = 1/\beta = k_B T \).

According to the Central Limit Theorem, as \( s \to \infty \), \( f(T, E) \) approaches a normal distribution with mean \( \langle E \rangle = s \, k_B T \) and variance \( \sigma^2 = s \, (k_B T)^2 \).

\subsection{Fraction of Atoms \( f(T) \) with Energy \( E \geq E_a \)}

To calculate the fraction of atoms \( f(T) \) with total energy \( E \) greater than the activation energy \( E_a \), we integrate \( f(T, E) \) from \( E_a \) to infinity:

\[
f(T) = \int_{E_a}^{\infty} f(T, E) \, dE = \frac{1}{(s - 1)!} \int_{E_a}^{\infty}  \beta^s E^{s - 1} e^{- \beta E } dE.
\]

Introducing the dimensionless variable \( x = \beta E \):

\[
f(T) = \frac{1}{(s - 1)!} \int_{x_a}^{\infty} x^{s - 1} e^{-x} \, dx,
\]
where \( x_a = \beta E_a \). Using the upper incomplete gamma function:

\[
\Gamma(s, x_a) = \int_{x_a}^{\infty} x^{s - 1} e^{-x} \, dx,
\]
the fraction becomes

\begin{equation}\label{FractionOfAtoms1}
	\boxed{f(T) = \frac{1}{(s - 1)!} \Gamma(s, x_a)}.
\end{equation}

For specific values of \( s \):
\begin{enumerate}
	\item \( s = 1 \):
	\[
	f(T) = e^{-x_a}.
	\]
	
	\item \( s = 2 \):
	\[
	f(T) = e^{-x_a} (1 + x_a).
	\]
	
	\item \( s = 3 \):
	\begin{equation}\label{FractionOfAtoms3D}
		f(T) = e^{-x_a} \left(1 + x_a + \frac{x_a^2}{2}\right).
	\end{equation}
\end{enumerate}

When \( E_a \gg k_B T \), we can use L'Hôpital's rule to simplify the fraction:
\[
\lim\limits_{x_a \to \infty} f(T) = \lim\limits_{x_a \to \infty} \frac{1 + x_a + \frac{x_a^2}{2}}{e^{x_a}} = \lim\limits_{x_a \to \infty} \frac{1 + x_a}{e^{x_a}} = e^{-x_a}
\]

Therefore, we can approximate

\begin{equation}\label{ReducedFractionOfAtoms}
	f(T) \approx e^{-x_a}.
\end{equation}

\subsection{Applying the Critical Hopping Rate Criterion}

According to Eq.(\ref{CriticalHoppingRateFractionOfAtoms}), the critical hopping rate parameter \(\gamma\) equals the fraction of atoms $f(T)$ with energy \(E \geq E_a\) at the melting point $T_m$:
\begin{equation}\label{CriticalHoppingRateParameterClassical}
	\boxed{\gamma = f(T_m) = e^{-\epsilon_m} \left( 1 + \epsilon_m + \frac{\epsilon_m^2}{2} \right )},
\end{equation}
where
\begin{equation}\label{EpsilonM}
	\epsilon_m = x_a |_{T=T_m}= \frac{E_a}{k_B T_m}.
\end{equation}

Depending on the known variables, we can:

1. Calculate \( \gamma \) using \( T_m \) and \( E_a \).

2. Determine \( T_m \) by solving Eq.~(\ref{CriticalHoppingRateParameterClassical}) for \( \epsilon_m \) and using

\begin{equation}\label{MeltingTemperature}
	T_m = \frac{ E_a }{ k_B \epsilon_m }.
\end{equation}

\subsection{Solving for \( T_m \)}

For \( E_a \gg k_B T_m \), using Eq.~(\ref{ReducedFractionOfAtoms}):

\[
\gamma = \exp \left( -\frac{ E_a }{ k_B T_m } \right).
\]

Taking the natural logarithm and solving for \( T_m \):

\begin{equation}\label{MeltingTemperatureTm}
	T_m = \frac{ E_a }{ k_B \ln (1/\gamma) }.
\end{equation}

The melting temperature \( T_m \) increases with higher activation energy \( E_a \) and decreases with higher critical hopping rate parameter \( \gamma \).

\subsection{Connection to Lindemann's Criterion}

We now demonstrate that the critical hopping rate criterion leads to Lindemann's criterion\cite{Lindemann1910,Frenkel1955}. Considering the relationship between the activation energy \( E_a \) and angular frequency \( \omega \):

\begin{equation}\label{ActivationEnergyEta}
	E_a \approx \frac{1}{2} m \omega^2 A^2 = \frac{1}{2} m \omega^2 \eta^2 a^2,
\end{equation}
where \( A = \eta a \) is the amplitude needed to escape the potential well, \( \eta \) is a proportionality constant, and \( a \) is the interatomic distance.

Substituting Eq.~(\ref{ActivationEnergyEta}) into Eq.~(\ref{EpsilonM}):

\begin{equation}\label{EpsilonM2}
	\epsilon_m = \frac{\eta^2 a^2}{2} \left(\frac{m \omega^2 }{k_B T_m}\right).
\end{equation}

The mean square displacement of a 3D harmonic oscillator is

\begin{equation}\label{MeanSquareDisplacement}
	\langle r^2 \rangle = 3 \left(\frac{k_B T}{m \omega^2}\right).
\end{equation}

At \( T = T_m \), combining Eqs.~(\ref{EpsilonM2}) and (\ref{MeanSquareDisplacement}):

\begin{equation}
	\langle r^2 \rangle = \frac{3 \eta^2}{2 \epsilon_m} a^2 = c_L^2 a^2,
\end{equation}
where
\begin{equation}\label{EqcL}
	c_L^2 = \frac{3 \eta^2}{2\epsilon_m}.
\end{equation}

Taking the square root, we obtain Lindemann's criterion:

\begin{equation}\label{LindemannCriterion}
	\sqrt{ \langle r^2 \rangle } = c_L a.
\end{equation}

Solving Eq.~(\ref{EqcL}) for \( \epsilon_m \) and substituting back into Eq.~(\ref{CriticalHoppingRateParameterClassical}), we relate the critical hopping rate parameter \( \gamma \) to the Lindemann parameter \( c_L \):

\begin{equation}
	\gamma = e^{-\left(\frac{3 \eta^2}{2 c_L^2}\right)} \left[ 1 + \frac{3 \eta^2}{2 c_L^2} + \frac{1}{2}\left(\frac{3 \eta^2}{2 c_L^2}\right)^2 \right].
\end{equation}

By starting from fundamental dynamics and statistical mechanics, we've derived a melting criterion based on the critical hopping rate that encompasses Lindemann's criterion as a special case. This approach provides a physical understanding of melting in crystalline lattices and bridges microscopic atomic behavior with macroscopic phase transitions.

\subsection{Connection to the Born Criterion}

In Born's original viewpoint \cite{Born1939,Born1940}, a crystal melts once it \emph{loses mechanical (elastic) stability} - that is, when it can no longer sustain shear. Mathematically, this is captured by the vanishing of one of the elastic constants, typically the shear modulus \(G\). Below, we demonstrate how the critical hopping‐rate theory can be reconciled with Born's mechanical‐instability picture in a simplified model.

\subsubsection{Born's Mechanical‐Instability Condition}

For a solid described by a free energy
\[
F(\boldsymbol{\varepsilon}) \;=\; F_0 \;+\; \tfrac12 \sum_{ij}\mathcal{C}_{ij}\,\varepsilon_i\,\varepsilon_j \;+\;\dots
\]
the elastic constants \(\mathcal{C}_{ij}\) appear as second derivatives of \(F\) with respect to strain components \(\varepsilon_i\). For instance, the shear modulus \(G\) associated with a particular shear strain \(\varepsilon_s\) is
\[
G \;=\; \frac{\partial^2 F}{\partial \varepsilon_s^2}\bigg|_{\varepsilon_s=0}.
\]
Born's criterion for melting can be stated as:
\[
\text{Mechanical instability occurs once } G \;\to\; 0.
\]
In other words, \emph{the crystal melts at the temperature (or pressure, etc.) for which the free energy's curvature in the shear direction becomes non-positive}, indicating that shear deformations cost no energy and the lattice loses rigidity.

\subsubsection{A Simple Model for Hopping and Shear Modulus}

Let us suppose that at any given temperature \(T\), a fraction \(f(T)\) of the atoms has hopped out of equilibrium sites (forming defects, e.g.\ vacancies/interstitials). One can imagine these ``hopped'' atoms no longer contribute to the restoring forces maintaining shear rigidity. To capture this effect minimally, write an \emph{effective} expression for the shear modulus \cite{Zaccone2023,Zaccone2017,Zaccone2016,Zaccone2013,Zaccone2011}:
\[
G(T) \;=\; G_0\,[1 \;-\; \alpha \, f(T)],
\]
where 
\(G_0\) is the shear modulus at low \(T\) (where \(f\approx 0\)), 
\(\alpha\) is a positive dimensionless constant encapsulating how strongly each ``hopped'' atom reduces the crystal's shear stiffness,
and \(f(T)\) is the fraction of hopped atoms at temperature \(T\).  

As temperature increases, the fraction \(f(T)\) grows (because thermally activated hopping becomes more likely), and thus \(G(T)\) decreases. Born's criterion implies that melting occurs around \(T_m\) for which 
\[
G(T_m) \;=\; 0
\quad \Longrightarrow \quad
1 \;-\; \alpha\,f(T_m) \;=\; 0
\quad \Longrightarrow \quad
f(T_m) \;=\; \frac{1}{\alpha}.
\]
Hence, \emph{the critical fraction} of hopped atoms is \(f(T_m) = 1/\alpha\). Once this fraction is reached, the lattice can no longer sustain shear forces: Born's mechanical instability.

\subsubsection{Relating \(f(T)\) to the Critical Hopping Rate}

Within the critical hopping‐rate theory, melting occurs when \(f(T)\) (or equivalently \(\Gamma/\nu_0\)) reaches a critical threshold \(\gamma\) given by Eq.(\ref{CriticalHoppingRateParameterClassical}).
Comparing this to the Born condition \(f(T_m) = 1/\alpha\), we see that
\begin{equation}\label{CriticalHoppingRateBornCriterion}
	e^{-\epsilon_m} \left( 1 + \epsilon_m + \frac{\epsilon_m^2}{2} \right ) \;=\;\frac{1}{\alpha},
\end{equation}
where $\epsilon_m = \frac{E_a}{k_B T_m}$.

Eq.(\ref{CriticalHoppingRateBornCriterion}) shows that the vanishing of the shear modulus (Born's criterion) aligns with the fraction of hopped atoms reaching its critical value (the hopping‐rate criterion), i.e.,
\[
G(T_m)=0
\quad\Longleftrightarrow\quad
f(T_m) \;=\;\gamma.
\]

In short, Born's criterion emerges naturally from a hopping‐centered viewpoint by recognizing that once sufficiently many atoms hop (i.e., once \(f(T)\) exceeds some threshold), the free energy's shear modulus is driven to zero. This clarifies how the \emph{mechanical‐instability perspective} (vanishing elastic moduli) and the \emph{critical‐hopping perspective} (barrier crossing at a certain frequency) are two facets of the same underlying phenomenon: the progressive weakening of lattice‐restoring forces due to thermally activated defects.

\section{Melting Criterion for Quantum Lattices}
\label{GeneralQuantumLattice}

In this section, we develop a mathematical framework to describe the melting of a quantum lattice of particles based on a critical hopping rate.
Owing to the mathematical complexity, we can only provide a formal expression for the critical hopping parameter \(\gamma\), rather than a closed-form analytic solution - even within the Debye approximation.
In the next section, we will explore the high-temperature limit and employ the Einstein approximation, which will then permit an exact solution for \(\gamma\).
In the high-temperature limit (\(k_B T \gg \hbar\omega\)), the quantum distribution recovers the classical Arrhenius form, as shown in Eq. (\ref{QuantumToClassical}).

\subsection{Characteristics of Quantum Lattices}

Compared to their classical counterparts, quantum lattices exhibit distinct vibrational and hopping behaviors:

\begin{itemize}
	\item \textbf{Phonons as Collective Excitations.}  
	In crystalline solids, phonons represent collective vibrational modes spanning all atoms in the lattice \cite{Kittel2005, Ashcroft1976}.  
	A phonon mode \(\mathbf{q}\) has energy \(\hbar \omega_{\mathbf{q}}\) and follows a Bose-Einstein distribution, with mean occupation \(\langle n_{\mathbf{q}} \rangle = 1/[e^{\hbar \omega_{\mathbf{q}}/(k_B T)} - 1]\).
	Each normal mode involves all atoms, so no single atom is solely associated with a given phonon.
	Consequently, each atom does not have a well-defined share of the total phonon energy.
	
	\item \textbf{Activation Energy \( E_a \).}  
	An atom wishing to hop to an adjacent site must overcome an energy barrier \(E_a\).  It does so by absorbing one or more phonons whose combined energy meets or exceeds \(E_a\).  The fraction of atoms with enough energy to hop is determined by summing over all possible single‐phonon and multi‐phonon absorption processes \cite{BornHuang1954,Ziman2001,Lax1952,Mahan2000,Bruesch}.
\end{itemize}

\subsection{Fraction of Atoms $f(T)$ with Energy $E > E_a$}

We define \(f(T)\) as the fraction of atoms that can surmount the barrier \(E_a\) by absorbing phonon(s).
Considering the quantum nature of phonons and their collective behavior, \(f(T)\) involves summing over all possible phonon absorption processes, including single-phonon and multi-phonon contributions.
Because the fraction \(f(T)\) should be dimensionless and bounded by 1, we insert a factor \(1/(3N)\) to normalize over the total number of phonon modes in a 3D crystal containing \(N\) atoms. (In a three-dimensional solid, there are \(3N\) vibrational modes.)

\subsubsection{Single-Phonon Contribution}

First, consider a single-phonon process:
\begin{enumerate}
	\item For each mode \(\mathbf{q}\) of frequency \(\omega_{\mathbf{q}}\), check if \(\hbar \omega_{\mathbf{q}} \ge E_a\) (phonon energy is sufficient to overcome the barrier \(E_a\)) using the Heaviside step function \(\Theta(\hbar \omega_{\mathbf{q}} - E_a)\).  
	\item Weight by the (mean) phonon occupation \(\langle n_{\mathbf{q}} \rangle = [\,e^{\hbar \omega_{\mathbf{q}}/(k_B T)} - 1]^{-1}\).  
	\item Sum over all phonon modes \(\mathbf{q}\) and normalize by \(3N\).  
\end{enumerate}

Thus, the fraction of atoms able to hop by absorbing one phonon is \cite{Kittel2005, BornHuang1954}:
\begin{equation}
	\label{SinglePhononNormalized}
	f_{1}(T) \;=\; \frac{1}{3N} \sum_{\mathbf{q}} \langle n_{\mathbf{q}} \rangle \;\Theta\!\bigl(\hbar \omega_{\mathbf{q}} - E_a\bigr).
\end{equation}

Here we assume \(\langle n_{\mathbf{q}} \rangle\) is a proxy for the probability that a mode \(\mathbf{q}\) can supply a phonon to the hopping atom.  This is a simplifying approximation, since \(\langle n_{\mathbf{q}} \rangle\) is actually the average number of phonons in mode \(\mathbf{q}\), and not the strict probability of finding at least one phonon.  See, for example, Refs.~\cite{Ashcroft1976, Mahan2000} for more rigorous discussions.

\subsubsection{Multi‐Phonon Processes}

In many situations, especially when the activation energy \( E_a \) is near or above the Debye energy, no single phonon has energy \(\ge E_a\).
Instead, an atom can hop by absorbing multiple lower-energy phonons \cite{Lax1952,Bruesch} whose sum exceeds the barrier:
\[
\hbar \omega_{\mathbf{q}_1} + \hbar \omega_{\mathbf{q}_2} + \cdots + \hbar \omega_{\mathbf{q}_n} \;\ge\; E_a.
\]
To incorporate these processes:

\paragraph{Probability of Absorbing a Particular Set of \(n\) Phonons.}

For a specific set of modes \(\{\mathbf{q}_1,\dots,\mathbf{q}_n\}\), the probability for having those $n$ phonons simultaneously absorbed by the atom is proportional to:
\[
\prod_{i=1}^n \langle n_{\mathbf{q}_i} \rangle
\;\;=\;\;
\prod_{i=1}^n
\frac{1}{\,e^{\hbar \omega_{\mathbf{q}_i}/(k_B T)} - 1\,}
\]
We multiply them together because we need all \(n\) required phonons to present simultaneously \cite{Ziman2001, Mahan2000}.

\paragraph{Enforcing the Barrier Condition.}
We then require the sum of these energies \(\sum_{i=1}^n \hbar \omega_{\mathbf{q}_i}\) is at least \(E_a\).
We do that by imposing the Heaviside step function
\[
\Theta\!\Bigl(\sum_{i=1}^n \hbar \omega_{\mathbf{q}_i} \;-\; E_a\Bigr).
\]
If the total energy is below \(E_a\), the step function is zero, meaning it does not contribute to the fraction of successful hops.
If the total energy exceeds \(E_a\), that configuration does contribute.

\paragraph{Summing Over All \(n\)‐Tuples of Phonons.}

We sum (or integrate) over every possible set of \(n\) modes and normalize by \(3N\).  One subtlety is whether to include a factor of \(1/n!\) to avoid counting permutations of identical sets of phonons.  If each \(\mathbf{q}_i\) can be distinct - and permutations represent physically distinct processes - one may omit \(1/n!\).  If permutations are considered the same physical event, then one often inserts a factor \(1/n!\) \cite{Mahan2000}.  Here we show it without \(1/n!\) for simplicity:

\begin{equation}\label{fnNormalized}
	f_n(T)
	\;=\;
	\frac{1}{3N}
	\sum_{\{\mathbf{q}_1,\dots,\mathbf{q}_n\}}
	\left[\prod_{i=1}^n \langle n_{\mathbf{q}_i} \rangle \right]
	\Theta\!\Bigl(\sum_{i=1}^n \hbar \omega_{\mathbf{q}_i} - E_a\Bigr).
\end{equation}

\subsubsection{Total Fraction of Hopping Atoms}
Atoms might absorb one phonon (if lucky enough to find one big enough in energy),  
two phonons,  
or more phonons.
The total fraction of atoms able to hop is the sum over \(n = 1\) to \(\infty\):
\[
f(T) 
\;=\;
\sum_{n=1}^{\infty} f_n(T),
\]
which remains dimensionless and physically bounded if the normalization is applied consistently and the parameters fall within a realistic range:
\begin{equation}\label{FractionOfAtomsQuantumLattice1}
	\boxed{f(T) =
		\frac{1}{3N} \sum_{n=1}^{\infty}
		\sum_{\{\mathbf{q}_1,\dots,\mathbf{q}_n\}}
		\left[\prod_{i=1}^n \frac{1}{ e^{\hbar \omega_{\mathbf{q}_i}/(k_B T)} - 1 }\right]
		\Theta\!\Bigl(\sum_{i=1}^n \hbar \omega_{\mathbf{q}_i} - E_a\Bigr)}.
\end{equation}

Finally, the critical hopping rate parameter \(\gamma\) at the melting temperature \(T_m\) is identified with:
\[
\gamma 
\;=\; 
f(T_m).
\]

\subsection{Applying the Debye Approximation}

For a 3D crystal with \(N\) atoms (hence \(3N\) modes), summing or integrating over multi‐phonon contributions is formidable.  Instead, we replace the discrete sum \(\sum_{\mathbf{q}}\) by an integral over the Debye phonon density of states (DOS) \(g(\omega)\) \cite{Debye1912, Kittel2005, BornHuang1954}, normalized such that
\[
\int_0^{\omega_D} g(\omega)\,d\omega \;=\; 3N,
\]
where the cutoff \(\omega_D\) is the Debye frequency and
\[
g(\omega) = 3N \frac{3\,\omega^2}{\omega_D^3},
\quad 0 \le \omega \le \omega_D.
\]
Real crystals can have more complex \(g(\omega)\), possibly with optical branches, etc.

Each occupation number $\langle n_{\mathbf{q}_i} \rangle$ becomes:
\[
\langle n(\omega_i,T)\rangle = \frac{1}{e^{\hbar \omega_i/(k_B T)} - 1}
\]

The single‐phonon term, for instance, becomes
\[
f_{1}(T)
\;=\;
\frac{1}{3N}
\int_0^{\omega_D} g(\omega)\,
\langle n(\omega, T)\rangle
\,\Theta(\hbar \omega - E_a)\,d\omega.
\]
Likewise, an \(n\)-phonon process corresponds to an \(n\)-dimensional integral:
\[
f_{n}(T) 
\;=\; 
\frac{1}{3N}
\int_0^{\omega_D}\!\!\dots
\int_0^{\omega_D}
\biggl[\prod_{i=1}^n g(\omega_i)\,\langle n(\omega_i,T)\rangle \biggr]\,
\Theta\!\Bigl(\sum_{i=1}^n \hbar \omega_i \,-\, E_a\Bigr)\,
d\omega_1 \dots d\omega_n,
\]

Combining all \(n\)‐phonon terms using $f(T) = \sum_{n=1}^{\infty} f_n(T)$, we have

\begin{equation}\label{FractionOfAtomsQuantumLattice2Normalized}
	\boxed{f(T) 
		\;=\; 
		\frac{1}{3N}
		\sum_{n=1}^{\infty}
		\int_0^{\omega_D}\!\!\dots
		\int_0^{\omega_D}
		\biggl[\prod_{i=1}^n g(\omega_i)\,\langle n(\omega_i,T)\rangle \biggr]\,
		\Theta\!\Bigl(\sum_{i=1}^n \hbar \omega_i \,-\, E_a\Bigr)\,
		d\omega_1 \dots d\omega_n}
	,
\end{equation}
yields the total fraction of atoms able to surmount the barrier \(E_a\).  This formal expression is typically intractable to evaluate exactly, thus preventing a closed‐form solution for \(\gamma\).  Nevertheless, using additional approximations (e.g., Einstein model \cite{Einstein1907} or high‐temperature expansions), one can simplify the multi‐dimensional integrals and sometimes extract an analytic \(\gamma\).

\section{Melting Criterion for Einstein-like Quantum Lattices}

Our primary objective is to investigate the melting transition, which occurs at elevated temperatures.
In this regime, all vibrational modes are excited and can be treated as independent phonons.
Consequently, the Einstein approximation - where all oscillators are independent and vibrate with the same angular frequency \(\omega\) - becomes particularly effective \cite{Einstein1907, Ashcroft1976}.
The advantage of this approach is that we can obtain an exact solution for the critical hopping rate parameter $\gamma$.
These results are both physically intuitive and straightforward to interpret, making it a valuable complement to the general quantum lattice discussed in Section \ref{GeneralQuantumLattice}.

\subsection{Energy Levels of a 3D Quantum Harmonic Oscillator}

In quantum mechanics, the energy levels of a harmonic oscillator are quantized \cite{Griffiths2005,ShankarQM1994}. Referring to Eq.~(\ref{HamiltonianNormalModeCoordinates}) from the previous section, the energy levels of a three-dimensional (3D) quasi-harmonic oscillator are:

\begin{equation}\label{QuantumHarmonicOscillatorEnergy}
	E_{n_x, n_y, n_z} = \hbar \omega_x \left( n_x + \tfrac{1}{2} \right) + \hbar \omega_y \left( n_y + \tfrac{1}{2} \right) + \hbar \omega_z \left( n_z + \tfrac{1}{2} \right),
\end{equation}
where \( n_x, n_y, n_z \) are non-negative integers representing quantum numbers for each spatial dimension, \( \hbar \) is the reduced Planck constant, and \( \omega_x, \omega_y, \omega_z \) are the oscillation frequencies along the \( x \), \( y \), and \( z \) axes, respectively.

In the quasi-harmonic approximation, these frequencies become temperature-dependent due to anharmonic effects and thermal expansion:

\[
\omega_i = \omega_i(T), \quad \text{for} \quad i = x, y, z.
\]

\subsection{Partition Function for the Quantum Oscillator}

The partition function \( Z \) encapsulates all thermodynamic information about a system in statistical mechanics \cite{Landau1980,Reif2009}. For a quantum oscillator, it sums over all possible energy states.

For a single quantum harmonic oscillator with frequency \( \omega \):

\[
Z_{\text{1D}} = \sum_{n=0}^{\infty} e^{- \beta E_n } = \sum_{n=0}^{\infty} e^{- \beta \hbar \omega \left( n + \tfrac{1}{2} \right) },
\]
where \( \beta = \frac{1}{k_B T} \), \( k_B \) is the Boltzmann constant, and \( T \) is the absolute temperature.

Recognizing the sum as a geometric series:

\[
Z_{\text{1D}} = e^{- \beta \hbar \omega / 2} \sum_{n=0}^{\infty} \left( e^{- \beta \hbar \omega} \right)^n = \frac{e^{- \beta \hbar \omega / 2}}{1 - e^{- \beta \hbar \omega}}.
\]

For a 3D oscillator with frequencies \( \omega_x, \omega_y, \omega_z \):

\begin{equation}\label{Z3D1}
	Z_{\text{3D}} = Z_x Z_y Z_z = \prod_{i = x, y, z} \frac{e^{- \beta \hbar \omega_i / 2}}{1 - e^{- \beta \hbar \omega_i}} = e^{- \beta \hbar (\omega_x + \omega_y + \omega_z) / 2} \prod_{i = x, y, z} \left( \frac{1}{1 - e^{- \beta \hbar \omega_i}} \right),
\end{equation}
where \( Z_i \) is the partition function along the \( i \)-th axis.

\subsection{Probabilities of Energy Levels and Their Distributions}

The probability of the system being in a state with energy \( E_{n_x, n_y, n_z} \) is:

\begin{align}
	P_{n_x, n_y, n_z} &= \frac{1}{Z_{\text{3D}}} e^{- \beta E_{n_x, n_y, n_z} } \\
	&= \frac{1}{Z_{\text{3D}}} e^{- \beta \hbar \omega_x \left( n_x + \tfrac{1}{2} \right) - \beta \hbar \omega_y \left( n_y + \tfrac{1}{2} \right) - \beta \hbar \omega_z \left( n_z + \tfrac{1}{2} \right) } \\
	&= \frac{1}{Z_{\text{3D}}} e^{- \beta \hbar (\omega_x + \omega_y + \omega_z) / 2} e^{- \beta \hbar \omega_x n_x } e^{- \beta \hbar \omega_y n_y } e^{- \beta \hbar \omega_z n_z }.
\end{align}

Substituting Eq.~(\ref{Z3D1}) into the above equation:

\begin{equation}\label{ProbabilityOfEnergyLevel1}
	P_{n_x, n_y, n_z} = \prod_{i = x, y, z} \left( 1 - e^{- \beta \hbar \omega_i} \right) e^{- \beta \hbar \omega_i n_i }.
\end{equation}

This shows that the probabilities along each axis are independent and multiplicative.

\subsection{Total Energy Distribution and Degeneracy \( g_N \)}

Equation~(\ref{ProbabilityOfEnergyLevel1}) is complicated due to anisotropy. For simplicity, we assume the material is isotropic, so \( \omega_x = \omega_y = \omega_z = \omega \).

Define the total quantum number:

\begin{equation}\label{TotalQuantumNumber}
	N = n_x + n_y + n_z.
\end{equation}

Then, Eq.~(\ref{QuantumHarmonicOscillatorEnergy}) simplifies to:

\[
E_N = \hbar \omega \left( N + \tfrac{3}{2} \right).
\]

Each energy level \( E_N \) is degenerate due to different combinations of \( n_x, n_y, n_z \) summing to \( N \). The degeneracy \( g_N \) is given by the number of ways to partition \( N \) into three non-negative integers satisfying \( n_x + n_y + n_z = N \):

\begin{equation}\label{Degeneracy}
	\begin{aligned}
		g_N &= \binom{N + 3 - 1}{3 - 1} = \binom{N + 2}{2} = \frac{(N + 2)!}{N! \times 2!} = \frac{(N + 2)(N + 1)}{2} \\
		&= \frac{N^2 + 3N + 2}{2}.
	\end{aligned}
\end{equation}

Therefore, Eq.~(\ref{ProbabilityOfEnergyLevel1}) reduces to:

\begin{equation}
	P(E_N) = g_N e^{- \beta \hbar \omega N} \left( 1 - e^{- \beta \hbar \omega} \right)^3.
\end{equation}

\subsection{Fraction of Atoms with Energy Greater Than Activation Energy \( E_a \)}

Let \( E_a \) be the activation energy required for an atom to hop. We need to find the minimum total quantum number \( N_a \) such that \( E_{N_a} \geq E_a \):

\[
E_N = \hbar \omega \left( N + \tfrac{3}{2} \right) \geq E_a.
\]

Rewriting:

\[
N \geq \frac{E_a}{\hbar \omega} - \frac{3}{2} \equiv N_a.
\]

Since \( N \) must be an integer, we take the ceiling of \( N_a \):

\begin{equation}\label{NaExpression1}
	N_a = \left \lceil \frac{E_a}{\hbar \omega} - \frac{3}{2} \right \rceil.
\end{equation}

The fraction \( f \) of atoms with total energy \( E_N \geq E_a \) is:

\begin{align}
	f(T)
	&= \sum_{E_N \geq E_a} P(E_N) \nonumber \\
	&= \left( 1 - e^{- \beta \hbar \omega} \right)^3 \sum_{N = N_a}^{\infty} g_N e^{- \beta \hbar \omega N }.
\end{align}

Using Eq.~(\ref{Degeneracy}):
\[
f(T) = \frac{\left( 1 - e^{- \beta \hbar \omega} \right)^3}{2} \sum_{N = N_a}^{\infty} \left( N^2 + 3N + 2 \right) e^{- \beta \hbar \omega N } = \frac{\left( 1 - e^{- \beta \hbar \omega} \right)^3}{2} S,
\]
where
\[
S = \sum_{N = N_a}^{\infty} \left( N^2 + 3N + 2 \right) e^{- \beta \hbar \omega N }.
\]

To calculate the sum $S$, we split it into three separate sums \cite{ArfkenWeberHarris2013}:
\[
S = \sum_{N = N_a}^\infty N^2 e^{- \mu N } + 3 \sum_{N = N_a}^\infty N e^{- \mu N } + 2 \sum_{N = N_a}^\infty e^{- \mu N },
\]
where $\mu = \beta \hbar \omega$.

The last term is a geometric series:

\[
S_1 = \sum_{N = N_a}^\infty e^{- \mu N } = \frac{e^{- \mu N_a }}{1 - e^{- \mu }}.
\]

The second term can be calculated using the derivative of the geometric series:

\[
S_2 = \sum_{N = N_a}^\infty N e^{- \mu N } = -\frac{d}{d \mu} \left( \frac{e^{- \mu N_a }}{1 - e^{- \mu }} \right ) = e^{- \mu N_a } \left[ \frac{N_a }{1 - e^{- \mu }} + \frac{e^{- \mu }}{(1 - e^{- \mu })^2} \right ].
\]

The third term can be calculated using the second derivative:
\[
S_3 = \sum_{N = N_a}^\infty N^2 e^{- \mu N } = \frac{d^2}{d \mu^2} \left( \frac{e^{- \mu N_a }}{1 - e^{- \mu }} \right ) = e^{- \mu N_a } \left[ \frac{N_a^2 }{1 - e^{- \mu }} + \frac{2 N_a e^{- \mu }}{(1 - e^{- \mu })^2} + \frac{e^{- \mu } (1 + e^{- \mu })}{(1 - e^{- \mu })^3} \right ].
\]

Combining the sums:
\begin{align}
	\begin{aligned}
		S &= S_3 + 3 S_2 + 2 S_1 \\
		&= e^{- \mu N_a } \left[ \frac{N_a^2 + 3 N_a + 2}{1 - e^{- \mu }} + \frac{(2 N_a + 3) e^{- \mu }}{(1 - e^{- \mu })^2} + \frac{2 e^{- 2 \mu }}{(1 - e^{- \mu })^3} \right ].
	\end{aligned}
\end{align}

Therefore, \( f(T) \) becomes:
\begin{equation}\label{FractionOfAtomsQuantum}
	\boxed{f(T)
		= \frac{e^{- \beta \hbar \omega N_a }}{2} \left\{ \left( N_a^2 + 3 N_a + 2 \right) \left( 1 - e^{- \beta \hbar \omega} \right)^2 + \left( 2 N_a + 3 \right) e^{- \beta \hbar \omega} \left( 1 - e^{- \beta \hbar \omega} \right) + 2 e^{- 2 \beta \hbar \omega} \right\}}.
\end{equation}

Finally, we obtain the critical hopping rate parameter $\gamma = f(T_m)$.\cite{Ashcroft1976}

\subsection{High-temperature limit}

At high temperatures, \( k_B T \gg \hbar \omega \) implies \( \beta \hbar \omega \ll 1 \).\cite{KittelKroemer1980} Therefore:

\begin{itemize}
	\item \( e^{- \beta \hbar \omega} \approx 1 - \beta \hbar \omega \)
	\item \( 1 - e^{- \beta \hbar \omega} \approx \beta \hbar \omega \)
	\item \( e^{- 2 \beta \hbar \omega} \approx 1 - 2 \beta \hbar \omega \).
\end{itemize}

Since \( N_a \) is large:

\[
N_a \approx \frac{E_a}{\hbar \omega} - \frac{3}{2} \approx \frac{E_a}{\hbar \omega}.
\]

Also:

\begin{itemize}
	\item \( \beta \hbar \omega N_a \approx \beta E_a = \epsilon \)
	\item \( N_a^2 + 3 N_a + 2 \approx N_a^2 \)
	\item \( 2 N_a + 3 \approx 2 N_a \).
\end{itemize}

Substituting these approximations into Eq.~(\ref{FractionOfAtomsQuantum}):

\begin{align}\label{QuantumToClassical}
	\begin{aligned}
		f &\approx \frac{e^{- \epsilon}}{2} \left\{ N_a^2 \left( \beta \hbar \omega \right)^2 + 2 N_a (1 - \beta \hbar \omega) \left( \beta \hbar \omega \right) + 2 (1 - 2 \beta \hbar \omega) \right\} \\
		&\approx e^{- \epsilon} \left( \frac{\epsilon^2}{2} + \epsilon + 1 \right),
	\end{aligned}
\end{align}
where \( \epsilon = \beta E_a \).

This expression matches Eq.~(78) obtained in the 3D classical lattice model, confirming the consistency of the result.

\section{Melting Criterion for Vortex Lattices in Superconductors}

In this section, we apply the mathematical framework developed in Section~\ref{MeltingCriterionForAClassicalLattice} to the vortex lattice in type-II superconductors.
Our aim is to analyze the characteristics of vortex lattices and then derive the temperature $T$ dependence of the melting field energy $B_m(T/T_c)$.
We will also demonstrate how this framework aligns with the experimental data.

\subsection{Characteristics of Vortex Lattices}

In type-II superconductors, the penetrated magnetic field exists in the form of quantized vortices, which form a vortex lattice due to pinning forces and the repulsive interactions between vortices \cite{Tinkham}. If we refer to the interaction forces between vortices as ``internal forces'', then the pinning forces can be considered as ``external forces'' acting on the vortex lattice.

However, a simple analysis reveals that the basic characteristics of pinned vortex lattices are different from those of crystalline lattices:
\begin{enumerate}
	\item A vortex lattice is usually subjected to pinning forces and therefore is an irregular lattice due to random pinning centers \cite{Yeshurun,Tinkham,Blatter1994}, whereas a crystal lattice is typically assumed to be regular, and pinning is typically ignored.
	
	\item A vortex lattice is maintained by the balance among pinning forces, elastic forces, and Lorentz forces (both internal and external forces). In contrast, a crystal lattice is maintained by the attractive and repulsive interactions between atoms (internal forces).
	
	\item A vortex lattice melts due to the reduction of the pinning potential (or pinning force) with increasing temperature or magnetic field \cite{Kwok1992,Safar1993}, whereas a crystal lattice melts due to increased atomic kinetic energy with increasing temperature.
\end{enumerate}

Despite these differences in physical meaning, vortex lattices and crystal lattices share the same mathematical framework:
\begin{enumerate}
	\item They can both be modeled using the harmonic oscillator approximation. \cite{Ma2005,Ashcroft1976,Landau1980,Kittel2005}.
	
	\item The melting transition of both vortex lattices and crystalline lattices is caused by the increase of hopping motion \cite{Blatter1994,Ashcroft1976,Landau1980,Kittel2005}.
\end{enumerate}

In many treatments of flux-lattice melting (especially at relatively high fields or temperatures), the pinning can sometimes be treated as negligible on short length scales - hence the analogy to a ``pure'' crystal lattice.
Therefore, some literature treats the vortex lattice in a manner closely analogous to a crystalline solid, despite real pinning in samples.

\subsection{Properties of Vortex Activation Energy \( U(T,B) \)}

Following the convention in the study of superconductivity, we denote the vortex activation energy as \( U(T,B) \). Generally, \( U(T,B) \) is a function of both temperature \( T \) and magnetic field \( B \).

In addition to Properties~\ref{ActivationEnergyGreaterThan0AtTm} and \ref{ActivationEnergyTemperature}, the vortex activation energy \( U(T,B) \) has the following properties:

\begin{property}
	The vortex activation energy \( U(T,B) \) vanishes at the upper critical field \( H_{c2}(T) \), i.e.,
	\begin{align}\label{UVan}
		U(T,H_{c2}(T)) = 0.
	\end{align}
\end{property}

\begin{proof}
	In the superconducting state, the order parameter \(\psi\) is finite, and so is the activation energy \(U(T,B)\).
	However, as the applied magnetic field \(H\) (or temperature \(T\)) increases, \(\psi\) decreases and eventually vanishes at the upper critical field \(H_{c2}(T)\).
	At this point, the superconductor transitions to the normal state, and the internal field \(B\) coincides with the external field \(H\).
	Consequently, since there are no superconducting vortices left to be pinned in the normal state, the activation energy also goes to zero at \(H_{c2}(T)\), i.e., \(U\bigl(T,H_{c2}(T)\bigr) = 0\).
	In other words, the notion of ``activation energy'' is meaningful only in the superconducting phase.
	Once the system reaches \(H_{c2}(T)\), vortices disappear, and \(U\) naturally vanishes.
\end{proof}

\begin{property}
	The vortex activation energy \( U(T,B) \) is a decreasing function of \( B \), i.e.,
	\begin{align}\label{DecreaseF}
		\frac{\partial U(T,B)}{\partial B} < 0.
	\end{align}
\end{property}

\begin{proof}
	As the magnetic field \( B \) increases, the vortices become more densely packed, and the energy barriers for vortex motion decrease due to increased interactions and reduced pinning effectiveness. This leads to a decrease in the activation energy \( U(T,B) \) with increasing \( B \), hence \( \frac{\partial U(T,B)}{\partial B} < 0 \).
\end{proof}

\begin{property}
	As the external magnetic field \( B \to 0 \), the vortex activation energy \( U(T,B) \) can be written as
	\begin{equation}\label{U1}
		U(T,B) = p(T) U_2(T,B),
	\end{equation}
	where
	\begin{equation}\label{TempF}
		p(T) = \left[ 1 - \left( \frac{T}{T_c} \right)^2 \right]^2
	\end{equation}
	is a temperature factor, and \( U_2(T,B) \) is a function of \( T \) and \( B \).
\end{property}

\begin{proof}
	The activation energy (or pinning potential) \( U(T,B) \) has units of energy and can be related to the condensation energy \cite{Tinkham}.
	
	At zero magnetic field \( B = 0 \), the superconducting state is characterized by the order parameter \( \psi \), and \( U(T,0) \) depends only on \( \psi \). Therefore, \( U(T,0) \) should have a temperature dependence similar to that of the condensation energy \( H_c^2(T) / 8\pi \), or the Ginzburg-Landau free energy difference \( f_s - f_n = \alpha |\psi|^2 + \frac{1}{2} \beta |\psi|^4 \):
	
	\[
	\Delta F(T) = \frac{H_c^2(T)}{8\pi} \approx \frac{H_c^2(0)}{8\pi} \left[ 1 - \left( \frac{T}{T_c} \right)^2 \right]^2 = \frac{H_c^2(0)}{8\pi} p(T),
	\]	
	where \( H_c(0) \) is the thermodynamic critical field at zero temperature, and \( T_c \) is the critical temperature.
	
	Similarly, the activation energy \( U(T,0) \) can be rewritten as \( U(T,0) = U_2(T,0) p(T) \), where \( p(T) \) is defined in Eq.~(\ref{TempF}).
	This form is approximate but captures well the vanishing of the order parameter as 
	$T \to T_c$.
\end{proof}

According to Eqs.~(\ref{UneqZero}), (\ref{DecreasingEaOfTemperature}) (\ref{UVan}), and (\ref{DecreaseF}), we have

\begin{equation}\label{CC2}
	\begin{aligned}
		& U_2(T,B_m(T)) > 0; \\
		& U_2(T,H_{c2}(T)) = 0; \\
		& \frac{\partial U_2(T,B)}{\partial T} < 0, \quad \frac{\partial U_2(T,B)}{\partial B} < 0. \\
	\end{aligned}
\end{equation}

\subsection{Expressions of Vortex Activation Energy \( U(T,B) \)}

Based on the properties of vortex activation energy \( U(T,B) \), we can explore the detailed expressions of \( U(T,B) \). Referring to Eq.~(\ref{U1}), we rewrite \( U_2(T,B) \) as

\begin{equation}\label{U2}
	U_2(T,B) = U_0 u(T,B),
\end{equation}
where \( u(T,B) \) is a function that has similar field dependence as the vortex activation energies constructed at a fixed temperature in earlier literature.

\subsubsection{Expressions of $u(T,B)$}

$u(T,B)$ can be determined by theoretical and experimental analysis.
In earlier studies, a number of activation energy models have been proposed \cite{Anderson1962,Beasley1969,Larkin1979,Feigelman1989,Feigelman1991,Zeldov1989,Zeldov1990,Ma2010,Ma2011}, including linear laws \cite{Anderson1962,Ma2010,Ma2011}, logarithmic laws \cite{Zeldov1989,Zeldov1990}, inverse power laws \cite{Feigelman1989,Feigelman1991}, generalized polynomial model \cite{Ma2010,Ma2011}, and others.
Here, we aim to extend the activation energies by introducing an explicit temperature dependence, thereby ensuring their validity across different temperatures.
To accommodate these considerations, we require that the activation energy \( U(T,B) \) in Eq.~(\ref{U1}) maintains the same field dependence at a fixed \( T \) as those proposed in earlier studies.

For simplicity, we define
\begin{align}\label{BBf}
	\frac{B}{H_{c2}} = \frac{B}{H_{c2}(0) \left( 1 - T / T_c \right)} = \frac{b}{1 - t},
\end{align}
where \( b = B / H_{c2}(0) \) is the scaled magnetic field and \( t = T / T_c \) is the scaled temperature. Referring to the activation energies proposed in earlier studies and the constraint conditions (\ref{CC2}), we write explicit expressions for \( u(T,B) \) as follows:

\begin{itemize}
	\item Inverse-Power Model \cite{Feigelman1989,Feigelman1991}:
	
	\begin{equation}\label{InversePU}
		u(T,B) = \frac{1}{\mu} \left[ \left( \frac{H_{c2}}{B} \right)^\mu - 1 \right] = \frac{1}{\mu} \left[ \left( \frac{1 - t}{b} \right)^\mu - 1 \right].
	\end{equation}
	
	\item Logarithmic Model \cite{Zeldov1989,Zeldov1990}:
	
	\begin{equation}\label{LogU}
		u(T,B) = \ln \left( \frac{H_{c2}}{B} \right) = \ln \left( \frac{1 - t}{b} \right).
	\end{equation}
	
	\item Generalized Polynomial Model \cite{Ma2010,Ma2011}:
	
	\begin{equation}\label{PolyU}
		u(T,B) = \sum_{l=1}^\infty a_l \left( 1 - \frac{B}{H_{c2}} \right)^l = \sum_{l=1}^\infty a_l \left( 1 - \frac{b}{1 - t} \right)^l,
	\end{equation}
	where the coefficients satisfy \( \sum_{l=1}^\infty a_l = 1 \). Here, we did not write out the linear and quadratic activation energy models because they are special cases of Eq.~(\ref{PolyU}).
\end{itemize}

\subsubsection{Expressions of $U_0$}

We can show that \( U_0 \) must be a constant:
\begin{enumerate}
	\item \( U_0 \) is independent of field \( B \):
	
	Because the field (\( B \)) dependence of the activation energies proposed in earlier studies has been confirmed by experimental observations, they should capture all the \( B \) dependence and must not include any other implicit factor of \( B \).
	Otherwise, the mathematical equations cannot be fitted to the experimental data obtained through magnetic field scanning.
	This should also apply to the function \( u(T,B) \). Therefore, \( U_0 \) in Eq.~(\ref{U2}) is independent of \( B \).
	
	\item \( U_0 \) is independent of temperature \( T \):
	
	Since \( U_0 \) is independent of \( B \), it can be either a pure function of \( T \) or a constant.
	If \( U_0 \) were a pure function of \( T \), then this function should have already been absorbed into the temperature factor \( p(T) \) in Eq.~(\ref{TempF}).
	This means that \( U_0 \) is independent of \( T \).
\end{enumerate}

Since \( U_0 \) is independent of field $B$ and is also independent of temperature $T$, it must be a constant.
In fact, the constant \( U_0 \) represents the ``strength'' (or the order of magnitude) of the pinning centers.
Strong pinning centers are usually normal defects and therefore have high \( U_0 \). Weak pinning centers may be variations in the order parameter \( \psi \) (e.g., fractional changes in \( \psi \) caused by strains) and therefore have low \( U_0 \).

\subsubsection{Final Expressions of \( U(T,B) \)}

Finally, we obtain the exact expressions of the activation energies \( U(T,B) \) by referring to Eq.(\ref{U1}) and Eq.(\ref{U2}):
\begin{equation}\label{UFinal}
	U(T, B) = U_0 p(T) u(T, B),
\end{equation}
where $U_0$ is a constant, $p(T)$ is given by Eq.(\ref{TempF}), and $u(T, B)$ is given by Eqs.~(\ref{InversePU})--(\ref{PolyU}), respectively.

\subsection{Applying the Critical Hopping Rate Criterion}

One of the important tasks in studying vortex lattices is to determine the temperature dependence of the melting field \( B_m(T) \).
To achieve this, we need to use exact expressions of the activation energies Eq.(\ref{UFinal}) and critical hopping rate criterion proposed in Proposition \ref{PropositionCriticalHoppingRate}.

Since vortex lattices in superconductors are usually studied at low temperatures where the activation energy \( U(T,B) \gg k_B T \), we will use the approximation in Eq.~(\ref{ReducedFractionOfAtoms}) to represent the fraction of vortices \( f(T, B) \) with energy \( \ge U(T,B) \):
\[
f(T, B) \approx e^{-U(T,B) / k_B T}.
\]

In the magnetic phase diagram, the melting field \( B_m(T) \) is defined as the maximum magnetic field under which the vortex lattice remains stable. Therefore, as the magnetic field increases to the melting field \( B_m(T) \), the hopping rate reaches the critical value:
\begin{equation}\label{CriticalHoppingRateVortexLattice}
	\gamma = f(T, B_m(T)) = e^{-U(T,B_m(T)) / k_B T}.
\end{equation}

By substituting \( U(t,b) \) into Eq.~(\ref{CriticalHoppingRateVortexLattice}), we can obtain the melting field \( B_m(t) \).

\begin{itemize}
	\item Using the Inverse-Power Model Eq.~(\ref{InversePU}):
	
	\begin{equation}\label{InversePBm}
		B_m(t) = H_{c2}(0) \left( 1 - t \right) \left[ 1 + \mu q(t) \right]^{-1/\mu},
	\end{equation}
	where
	\begin{equation}\label{qFunc}
		q(t) = \frac{k_B T_c \ln \gamma^{-1}}{U_0} \frac{t}{\left( 1 - t^2 \right)^2}.
	\end{equation}
	
	\item Using the Logarithmic Model Eq.~(\ref{LogU}):
	
	\begin{equation}\label{LogBm}
		B_m(t) = H_{c2}(0) \left( 1 - t \right) e^{- q(t)}.
	\end{equation}
	
	\item Using the Polynomial Model Eq.~(\ref{PolyU}):
	
	\begin{equation}\label{PolyBm}
		B_m(t) = H_{c2}(0) \left( 1 - t \right) \left[ 1 - \sum_{l=1}^\infty b_l q^l(t) \right],
	\end{equation}
	where the coefficients \( b_l \) are
	\begin{equation}\label{bns}
		\begin{aligned}
			b_l = & \frac{1}{l a_1^l} \frac{1}{l} \sum_{s,p,u,\dots} (-1)^{s+p+u+\cdots} \cdot \\
			& \frac{l(l+1)\cdots(l - 1 + s + p + u + \cdots)}{s! p! u! \cdots} \cdot \\
			& \left( \frac{a_2}{a_1} \right)^s \left( \frac{a_3}{a_1} \right)^p \left( \frac{a_4}{a_1} \right)^u \cdots, \\
		\end{aligned}
	\end{equation}
	where \( s + 2p + 3u + \cdots = l - 1 \). The inverse coefficients \( a_l \) can be obtained by exchanging \( b_l \leftrightarrow a_l \).
\end{itemize}

\section{Numerical calculation}

The literature indicates that equilibrium vacancy concentrations at melting points \cite{Kraftmakher1998} are in the range \( 10^{-4} \) to \( 10^{-3} \).
We now use the experimental data of elements Lead and Aluminum \cite{Feder1958}, and vortex lattice \cite{Ramshaw2012} to evaluate the critical hopping rate parameter \( \gamma \) and check consistency with our model.

\subsection{Lead Pb}

Experimental data of lead \cite{Rumble2024,Kittel2005}:
\begin{itemize}
	\item Activation energies \( E_a \): 0.53 eV
	
	\item Melting temperature $T_m$: 600.61 K
	
	\item Debye temperature $\Theta_D$: 105 K
\end{itemize}

\subsubsection{Classical lattices}

First, we use classical lattice model Eq.(\ref{CriticalHoppingRateParameterClassical}) to calculate the temperature dependence of fraction \( f(T) \) and the critical hopping rate parameter $\gamma_{\text{Pb}}$ for lead.

\begin{itemize}
	\item Calculate \( \epsilon_{\text{Pb}} \):
	
	\[
	\epsilon_{\text{Pb}} = \dfrac{ 0.53 \text{ eV} }{ (8.617333 \times 10^{ -5 } \text{ eV/K}) \times 600.61 \text{ K} } \approx 10.24,
	\]
	and
	\[
	e^{ - \epsilon_{\text{Pb}} } \approx 3.59 \times 10^{ -5 }.
	\]
	
	\item Calculate \( \gamma_{\text{Pb}} \):
	
	Substituting into Eq.(\ref{CriticalHoppingRateParameterClassical}), we have
	
	\[
	\gamma_{\text{Pb}} = 3.59 \times 10^{ -5 } \times \left(1 + 10.24 + \dfrac{ (10.24)^2 }{ 2 }\right) \approx 0.00229.
	\]
\end{itemize}

\subsubsection{Quantum lattice}

Second, we use quantum lattice \cite{Kittel2005,Pathria2021} model Eq.(\ref{FractionOfAtomsQuantum}) to calculate the temperature dependence of fraction \( f(T) \) and the critical hopping rate parameter $\gamma_{\text{Pb}}$ for lead.

\begin{itemize}
	\item Compute \( N_a \):
	
	Convert \( E_a \) to joules:
	\[
	E_a = 0.53\,\text{eV} \times 1.60218 \times 10^{-19}\,\text{J/eV} = 8.49155 \times 10^{-20}\,\text{J}.
	\]
	
	Using the Debye temperature of lead (\( \Theta_D = 105\,\text{K} \)):
	\[
	\hbar \omega_{\text{Pb}} = k_B \Theta_D = (1.380649 \times 10^{-23}\,\text{J/K})(105\,\text{K}) = 1.44968 \times 10^{-21}\,\text{J}.
	\]
	
	Compute \( N_a \):
	\[
	N_a = \left\lceil \frac{E_a}{\hbar \omega_{\text{Pb}}} - \frac{3}{2} \right\rceil = \left\lceil \frac{8.49155 \times 10^{-20}\,\text{J}}{1.44968 \times 10^{-21}\,\text{J}} - 1.5 \right\rceil = \left\lceil 58.591 - 1.5 \right\rceil = 58.
	\]

	\item Compute exponential terms:
	
	\[
	\beta = \frac{1}{k_B T_m} = \frac{1}{(1.380649 \times 10^{-23}\,\text{J/K})(600.61\,\text{K})} = 1.19914 \times 10^{20}\,\text{J}^{-1}.
	\]
	
	\[
	\beta E_a = (1.19914 \times 10^{20}\,\text{J}^{-1})(8.49155 \times 10^{-20}\,\text{J}) = 10.19.
	\]
	
	$$
	\beta \hbar \omega_{\text{Pb}} = \beta k_B \Theta_D = \frac{\Theta_D}{T_m} = \frac{105\,\text{K}}{600.61\,\text{K}} = 0.17487
	$$
	
	\[
	\beta \hbar \omega_{\text{Pb}} N_a = \beta E_a - \frac{3}{2} \beta \hbar \omega_{\text{Pb}} = 10.19 - 0.2623 = 9.928.
	\]
	
	\begin{itemize}		
		\item \( e^{- \beta \hbar \omega_{\text{Pb}} } = e^{- 0.17487 } = 0.8397 \)
		\item \( e^{- \beta \hbar \omega_{\text{Pb}} N_a } = e^{- 9.928 } = 4.87 \times 10^{-5} \)
		\item \( 1 - e^{- \beta \hbar \omega_{\text{Pb}} } = 0.1603 \)
		\item \( \left( 1 - e^{- \beta \hbar \omega_{\text{Pb}} } \right)^2 = 0.0257 \)
		\item \( e^{- 2 \beta \hbar \omega_{\text{Pb}} } = 0.7051 \).
	\end{itemize}

	\item Compute numerator terms:
	
	First term:	
	\[
	\left( N_a^2 + 3 N_a + 2 \right) \left( 1 - e^{- \beta \hbar \omega_{\text{Pb}} } \right)^2 = (58^2 + 3 \times 58 + 2) \times 0.0257 = 90.66.
	\]
	
	Second term:	
	\[
	\left( 2 N_a + 3 \right) e^{- \beta \hbar \omega_{\text{Pb}} } \left( 1 - e^{- \beta \hbar \omega_{\text{Pb}} } \right) = 119 \times 0.8397 \times 0.1603 = 16.01.
	\]
	
	Third term:	
	\[
	2 e^{- 2 \beta \hbar \omega_{\text{Pb}} } = 1.4102.
	\]
	
	\item Compute \( \gamma_{\text{Pb}} \):
	
	\[
	\gamma_{\text{Pb}} = \frac{ e^{- \beta \hbar \omega_{\text{Pb}} N_a } }{2} \times \text{Numerator} = \frac{4.87 \times 10^{-5}}{2} \times \left(90.66 + 16.01 + 1.4102\right) = 0.00264.
	\]
\end{itemize}

\subsubsection{Comparison with experimental data}

Using the classical lattice model, the calculated \( \gamma \) values for lead is approximately \( 0.00229 \).
Using the quantum lattice model, the calculated \( \gamma \) value for lead is \( 0.00264 \).
This shows that the numeric results fall within the experimental range of equilibrium vacancy concentrations at the melting point \( T_m \) reported in the literature (\( 10^{-4} \) to \( 10^{-3} \)).\cite{Feder1958,Kraftmakher1998}

\begin{figure}[htb]
	\centering
	\includegraphics[height=0.6\textwidth]{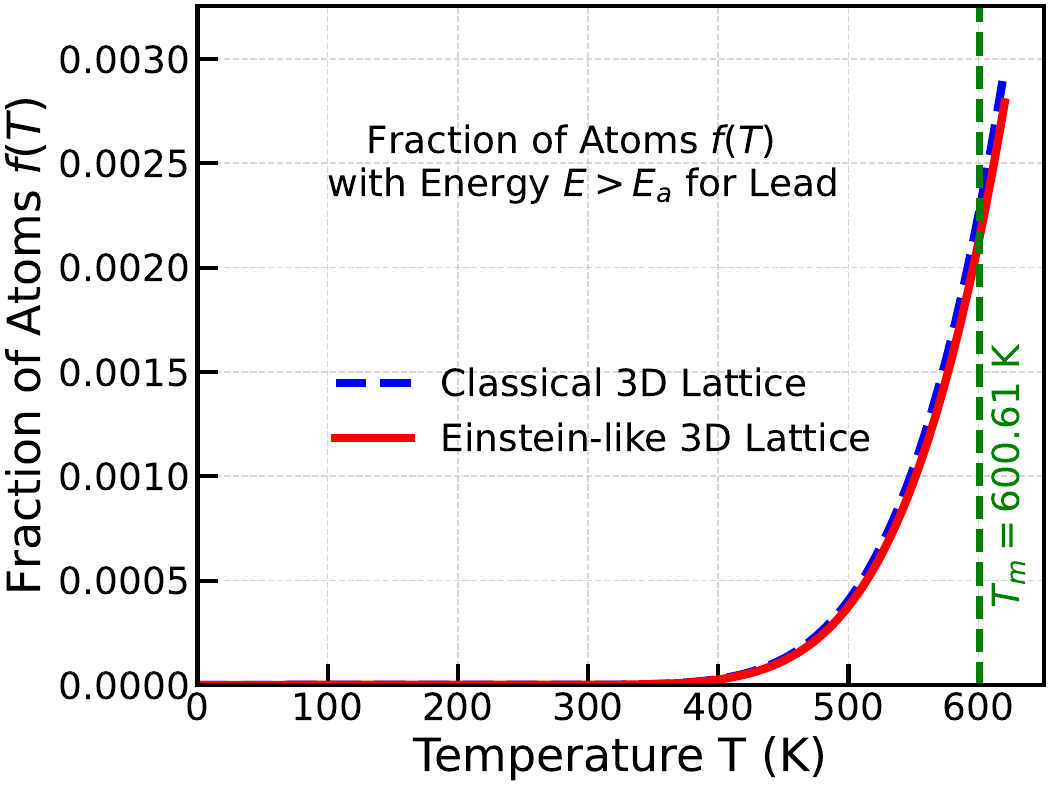}
	\caption{(Color online)
		Fraction of atoms \( f(T) \) with energy \( E \) greater than the activation energy \( E_a \) for Lead Pb.
		The parameters are: activation energies $ E_a = 0.53$ eV, melting temperature $T_m = 600.61$ K, and Debye temperature $\Theta_D = 105$ K.
	}
	\label{FractionLead}
\end{figure}

Fig. \ref{FractionLead} shows the temperature dependence of fraction of atoms \( f(T) \) with energy \( E \) greater than the activation energy \( E_a \) for Lead Pb.
The melting temperature $T_m = 600.61$ K are also specified.
It suggests that our classical lattice model and quantum lattice model predict critical hopping rate parameter for lead \( \gamma_{\text{Pb}} \) consistent with experimental observations for lead.

\subsection{Aluminum}

Experimental data of aluminum \cite{Rumble2024,Kittel2005}:
\begin{itemize}
	\item Activation energies \( E_a \): 0.77 eV
	
	\item Melting temperature $T_m$: 933.47 K
	
	\item Debye temperature $\Theta_D$: 428 K
\end{itemize}

\subsubsection{Classical lattice}

First, we use classical lattice model Eq.(\ref{CriticalHoppingRateParameterClassical}) to calculate the temperature dependence of fraction \( f(T) \) and the critical hopping rate parameter $\gamma_{\text{Al}}$ for aluminum.

\begin{itemize}
	\item Calculate \( \epsilon_{\text{Al}} \):
	
	\[
	\epsilon_{\text{Al}} = \dfrac{ 0.77 \text{ eV} }{ (8.617333 \times 10^{ -5 } \text{ eV/K}) \times 933.47 \text{ K} } \approx 9.57,
	\]
	and
	\[
	e^{ - \epsilon_{\text{Al}} } \approx 6.99 \times 10^{ -5 }.
	\]
	
	\item Calculate \( \gamma_{\text{Al}} \):
	
	Substituting into Eq.(\ref{CriticalHoppingRateParameterClassical}), we have
	\[
	\gamma_{\text{Al}} = 6.99 \times 10^{ -5 } \times \left(1 + 9.57 + \dfrac{ (9.57)^2 }{ 2 }\right) \approx 0.00394.
	\]
\end{itemize}

\subsubsection{quantum lattices}

Second, we use quantum lattice model Eq.(\ref{FractionOfAtomsQuantum}) to calculate the temperature dependence of fraction \( f(T) \) and the critical hopping rate parameter $\gamma_{\text{Al}}$ for aluminum.

\begin{itemize}
	\item Compute \( N_a \):
	
	Convert \( E_a \) to joules:
	
	\[
	E_a = 0.77\,\text{eV} \times 1.60218 \times 10^{-19}\,\text{J/eV} = 1.23368 \times 10^{-19}\,\text{J}.
	\]
	
	Using the Debye temperature for aluminum (\( \Theta_D = 428\,\text{K} \)):
	
	\[
	\hbar \omega_{\text{Al}} = k_B \Theta_D = (1.380649 \times 10^{-23}\,\text{J/K})(428\,\text{K}) = 5.91038 \times 10^{-21}\,\text{J}.
	\]
	
	Compute \( N_a \):
	
	\[
	N_a = \left\lceil \frac{E_a}{\hbar \omega_{\text{Al}}} - \frac{3}{2} \right\rceil = \left\lceil \frac{1.23368 \times 10^{-19}\,\text{J}}{5.91038 \times 10^{-21}\,\text{J}} - 1.5 \right\rceil = \left\lceil 208.69 - 1.5 \right\rceil = 208.
	\]

	\item Compute exponential terms:
	
	\[
	\beta = \frac{1}{k_B T_m} = \frac{1}{(1.380649 \times 10^{-23}\,\text{J/K})(933.47\,\text{K})} = 7.82312 \times 10^{19}\,\text{J}^{-1}.
	\]

	\[
	\beta E_a = (7.82312 \times 10^{19}\,\text{J}^{-1})(1.23368 \times 10^{-19}\,\text{J}) = 9.646.
	\]

	\[
	\beta \hbar \omega_{\text{Al}} = \beta k_B \Theta_D = \frac{\Theta_D}{T_m} = \frac{428\,\text{K}}{933.47\,\text{K}} = 0.45859.
	\]

	\[
	\beta \hbar \omega_{\text{Al}} N_a = \beta E_a - \frac{3}{2} \beta \hbar \omega_{\text{Al}} = 9.646 - 0.6879 = 8.958.
	\]
	
	\begin{itemize}
		\item \( e^{- \beta \hbar \omega_{\text{Al}} } = e^{- 0.45859 } = 0.6320 \)
		\item \( e^{- \beta \hbar \omega_{\text{Al}} N_a } = e^{- 8.958 } = 1.29 \times 10^{-4} \)
		\item \( 1 - e^{- \beta \hbar \omega_{\text{Al}} } = 0.3680 \)
		\item \( \left( 1 - e^{- \beta \hbar \omega_{\text{Al}} } \right)^2 = 0.1354 \)
		\item \( e^{- 2 \beta \hbar \omega_{\text{Al}} } = 0.3994 \).
	\end{itemize}

	\item Compute numerator terms:
	
	First term:	
	\[
	\left( N_a^2 + 3 N_a + 2 \right) \left( 1 - e^{- \beta \hbar \omega_{\text{Al}} } \right)^2 = (208^2 + 3 \times 208 + 2) \times 0.1354 = 59544.
	\]
	
	Second term:	
	\[
	\left( 2 N_a + 3 \right) e^{- \beta \hbar \omega_{\text{Al}} } \left( 1 - e^{- \beta \hbar \omega_{\text{Al}} } \right) = 419 \times 0.6320 \times 0.3680 = 97.35.
	\]
	
	Third term:	
	\[
	2 e^{- 2 \beta \hbar \omega_{\text{Al}} } = 0.7989.
	\]
	
	\item Compute \( \gamma_{\text{Al}} \):
	
	\[
	\gamma_{\text{Al}} = \frac{ e^{- \beta \hbar \omega_{\text{Al}} N_a } }{2} \times \text{Numerator} = \frac{1.29 \times 10^{-4}}{2} \times \left(59544 + 97.35 + 0.7989\right) = 0.00384.
	\]
\end{itemize}

\subsubsection{Comparison with experimental data}

Using the classical lattice model, the calculated \( \gamma_{\text{Al}} \) value for aluminum is approximately \( 0.00394 \).
Using the quantum lattice model, the calculated \( \gamma_{\text{Al}} \) value for aluminum is \( 0.00384 \).
This shows that the numeric results fall within the experimental range of equilibrium vacancy concentrations at the melting point \( T_m \) reported in the literature (\( 10^{-4} \) to \( 10^{-3} \)).\cite{Feder1958,Kraftmakher1998}
It suggests that our classical lattice model and quantum lattice model predict critical hopping rate parameter for aluminum \( \gamma_{\text{Al}} \) consistent with experimental observations for aluminum.

\begin{figure}[htb]
	\centering
	\includegraphics[height=0.6\textwidth]{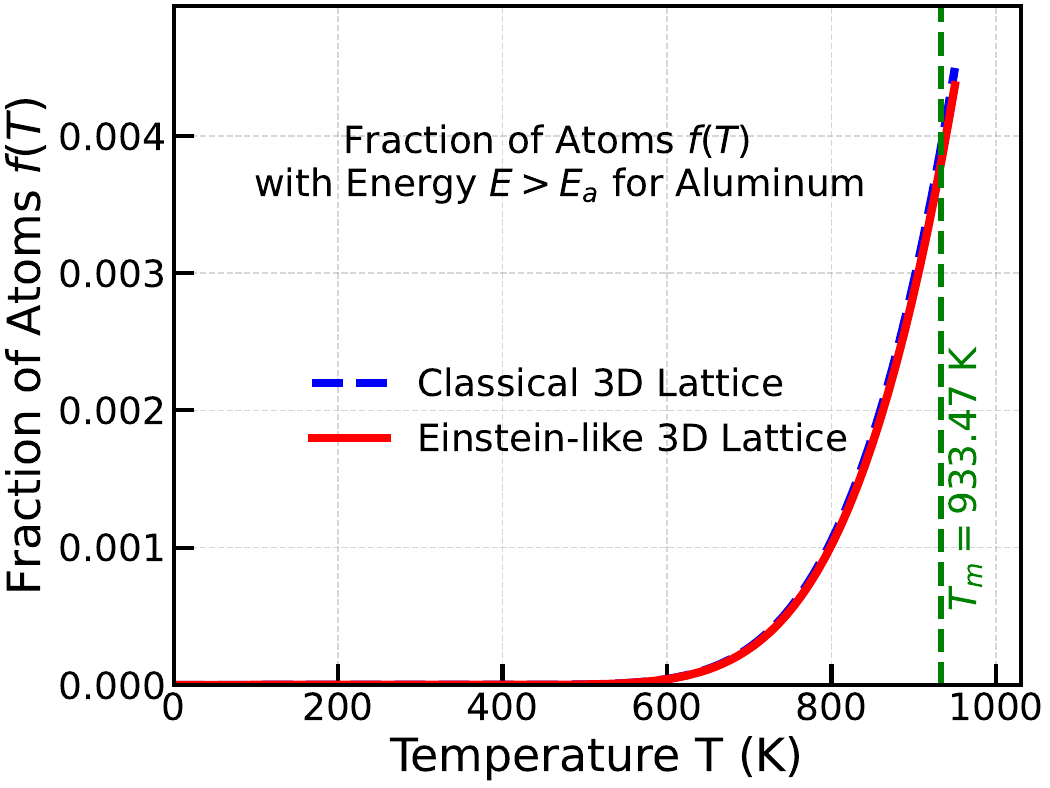}
	\caption{(Color online)
		Fraction of atoms \( f(T) \) with energy \( E \) greater than the activation energy \( E_a \) for aluminum.
		The parameters are: activation energies $ E_a = 0.77$ eV, melting temperature $T_m = 933.47$ K, and Debye temperature $\Theta_D = 428$ K.
	}
	\label{FractionAluminum}
\end{figure}

Fig. \ref{FractionAluminum} shows that the calculated values of critical hopping rate parameters $ \gamma = f(T_m) $ for both lead and aluminum are consistent with experimental data on equilibrium vacancy concentrations at their respective melting points \cite{Feder1958,Kraftmakher1998}.
This suggests that our classical lattice model and quantum lattice model, when using appropriate \( E_a \) values from experimental data, can produce physically meaningful results.

\subsection{Vortex Lattice}

For vortex lattice, we use Eq.~(\ref{InversePBm}) as an example to fit the experimental data points adapted from Ref. \cite{Ramshaw2012}.

\begin{figure}[htb]
	\centering
	\includegraphics[height=0.6\textwidth]{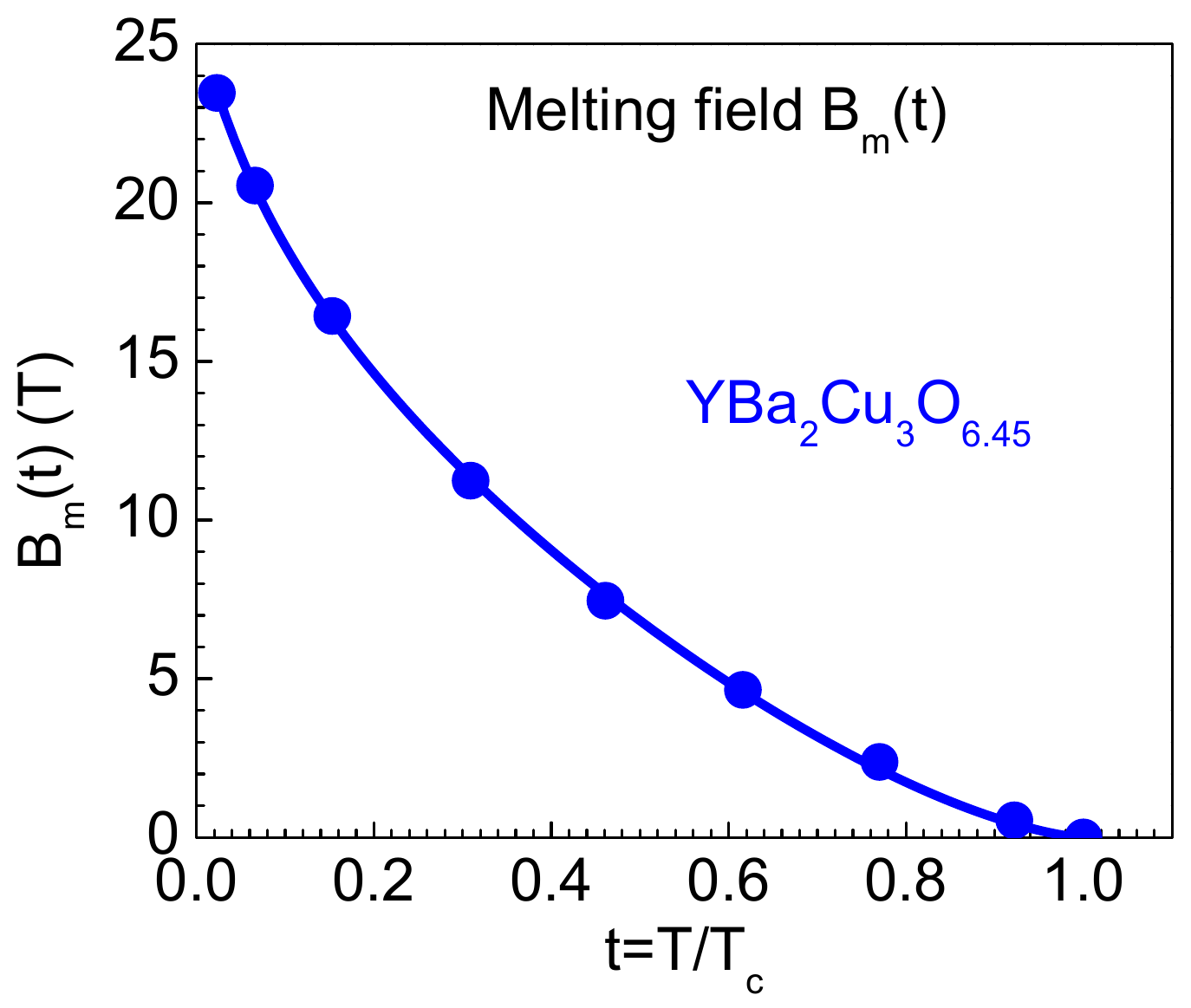}
	\caption{(Color online)
		The melting fields of a $YBa_2Cu_3O_{6.67}$ superconductor.
		The scattering points are the experimental data of melting fields and the solid lines are the theoretical fits to the experimental data with Eq.(\ref{InversePBm}).
		The data points are adapted from Ref.~\cite{Ramshaw2012} and the fitting results are: $H_{c2}(0)=25.8209\pm 0.33785$ T, $k T_c \text{ln} P_m^{-1}/U_0 = 3.54094 \pm 0.50666$, $\mu=4.15171 \pm 0.28223$.
	}
	\label{FigureVortexlattice}
\end{figure}

Fig. \ref{FigureVortexlattice} shows the fits of Eq.~(\ref{InversePBm}), the melting field equation derived with the inverse-power model, to the experimental data from \( YBa_2Cu_3O_{6.67} \) superconductors.

\section{Discussion}

The critical hopping-rate perspective offers several advantages over amplitude-based
criteria. First, it ties directly to thermally activated processes that generate
\emph{defects} (vacancies, interstitials, or local topological disruptions) in the
lattice. While Lindemann's rule captures an average amplitude threshold, it does not
distinguish wave-like phonons from localized barrier-crossing events. Our approach
makes that distinction explicit.

Additionally, we find that near melting, the fraction of particles (or vortices) that
surmount energy barriers typically remains small (\(\sim 10^{-4}\) to \(10^{-3}\)).
Yet even this modest subset is enough to undermine crystalline order if such hopping
occurs at a sufficiently high rate, explaining why the onset of melting is relatively
sharp and strongly influenced by exponential Arrhenius factors.

A key strength of our framework is its versatility. For classical solids, one simply
inserts the relevant activation energy (\(E_a\)) - accounting for anharmonic
effects - into an Arrhenius form to obtain the melting temperature. For vortex lattices,
the same approach employs the vortex pinning energy \(U(T,B)\) to derive the melting
field \(B_m(T)\). In quantum models such as the fermionic or bosonic Hubbard model, the
``activation barrier'' corresponds to on-site repulsion \(U\), and the hopping amplitude
\(t\) plays the role of \(\Gamma\). All these systems thus share a single criterion
for losing order once the ratio \(\Gamma/\nu_0\) or \(t/U\) exceeds a critical
value \(\gamma \approx 10^{-3}\text{--}10^{-1}\).

Nevertheless, certain limitations remain. Like other melting criteria (e.g., Lindemann
or Born), our approach does not emerge \emph{strictly} from first principles within the
harmonic approximation; it is imposed as a physically motivated measure for when barrier
crossings become frequent enough to destroy long-range order. More detailed modeling of
collective defects or pinning centers in inhomogeneous materials may provide further
refinements.

While our primary results focus on classical/Einstein-like lattices and vortex lattices,
the same hopping-based rationale can be extended to quantum lattice models such as the
fermionic or bosonic Hubbard model. We explore these extensions at the mean-field level
in Appendices~\ref{MeltingCriterionFermionicLattices} and
\ref{MeltingCriterionBosonicLattices}. However, we stress that the large discrepancies
with experiment in strongly correlated systems are artifacts of mean-field theory.

Achieving full quantitative agreement likely requires more sophisticated approaches
beyond the mean-field approximation, such as advanced renormalization-group techniques
or numerical simulations. Additionally, future experiments using ultracold atoms in
optical lattices could more directly test the barrier-crossing perspective. For instance,
single-site detection in cold-atom experiments may enable the counting of local hopping
events.

\textbf{Relation to equilibrium thermodynamics.} Our approach does \emph{not} purport
to override or contradict the usual free-energy criterion for melting. In standard
thermodynamics, a first-order melting transition occurs where the free energies of the
solid and liquid phases coincide. We simply propose that \emph{equilibrium} defect
formation - quantified by a hopping rate \(\Gamma \propto \exp(-E_a/k_B T)\) - provides a
practical microscopic marker for this same transition. The phrase ``hopping fraction''
could be misread as purely kinetic or nonequilibrium, but in fact our defect concentration
and activation energies all stem from the equilibrium partition function. Hence, there is
nothing ``deeply unphysical'' about referencing a critical \(\Gamma\). Rather, \(\Gamma/\nu_0\)
is just another way to track how many particles reside in barrier-crossing states at
equilibrium. We find that once this fraction \(\Gamma/\nu_0\) (or \(f(T)\)) passes a small
threshold \(\gamma\), the lattice loses mechanical stability in the same sense described
by Born's criterion or Lindemann's rule.

\section{Conclusion}

We have introduced and tested a melting criterion based on the notion of a
\emph{critical hopping rate} of particles. Our central hypothesis is that
lattice melting is governed by localized hopping events (rather than by large-amplitude
lattice vibrations alone), and that melting ensues once these barrier-crossing
processes occur frequently enough to disrupt long-range order. By combining
a probabilistic analysis of Arrhenius-like rates with the assumption of a
\textit{critical fraction} of particles crossing their barriers, we obtain a
unified criterion that not only reproduces the classical Lindemann and Born
results but also generalizes naturally to superconducting vortex matter and
quantum lattice systems.

This dynamical viewpoint underscores the role of defects or excitations that
break the lattice integrity, providing a deeper connection between collective
phenomena and the onset of disorder. Our comparison with experimental vacancy
data in metals and vortex-melting measurements in superconductors corroborates
the physical plausibility of this criterion. Extensions to ultracold atoms in
optical lattices and to correlated-electron systems suggest that critical
hopping rates (or tunnelings) offer a wide-ranging conceptual tool for
understanding when and why an ordered state ``melts''.

Furthermore, this ``critical hopping-rate'' perspective can be viewed as a restatement
of the conventional melting transition from a defect-based vantage point. By
expressing the melting point in terms of the \emph{equilibrium} fraction of
particles whose energies exceed a barrier \(E_a\), we recover established
melting criteria (e.g., Lindemann's amplitude limit, Born's shear-instability
condition) within a unified framework. Our numerical estimates of this fraction
match well with measured vacancy concentrations in metals and vortex activation
energies in superconductors.

Of course, \emph{non-equilibrium} effects (such as superheating, undercooling,
or rapid quenching) can shift or delay the observed melting temperature from
the ideal equilibrium value. Nevertheless, as a simple and intuitive
\emph{equilibrium-based rule-of-thumb}, the critical hopping fraction provides
a fresh window on the same underlying physics that free-energy arguments
describe. We hope that future studies - especially in vortex lattices or strongly
correlated lattices (Hubbard models) - will find this defect-centered,
dynamical viewpoint a valuable complement to more traditional analyses.

\appendix

\section{Melting Criterion for Fermionic Lattices}
\label{MeltingCriterionFermionicLattices}

In this section, we develop a theoretical framework of quantum version melting criterion for a fermionic lattice based on the critical hopping rate criterion.
We used fermionic Hubbard model \cite{Hubbard1963} and consider the half-filled regime, where the competition between kinetic (hopping) and interaction (on-site Coulomb repulsion) energies leads to a Mott metal-insulator transition \cite{Imada1998}.

Here when we talk about a fermionic lattice ``melting'', we must distinguish between the physical lattice of ions and the electronic state of the system.
In a Mott insulator with antiferromagnetic order, the ions remain in a solid crystalline arrangement - the lattice in the literal sense is not melting into a liquid state.
Instead, what ``melts'' is the long-range magnetic order of the electrons.

We aim to determine the critical hopping amplitude \( t_c(T) \) at which antiferromagnetic order in a fermionic lattice vanishes, giving way to a metallic (paramagnetic) phase.
In essence, as the hopping amplitude \(t\) increases relative to the interaction energy, the lattice undergoes a transition from a Mott insulating state with antiferromagnetic order to a disordered metallic state, effectively ``melting'' its long-range magnetic order.

\subsection{Fermionic Hubbard Model}

The fermionic Hubbard model is a cornerstone of condensed matter theory, describing interacting electrons on a lattice. Its Hamiltonian is \cite{Hubbard1963}:
\begin{equation}\label{HamiltonianFermionicHubbardModel}
	\hat{H} = - t \sum_{\langle i, j \rangle, \sigma} \left( \hat{c}_{i\sigma}^\dagger \hat{c}_{j\sigma} + \text{h.c.} \right) + U \sum_{i} \hat{n}_{i\uparrow} \hat{n}_{i\downarrow} - \mu \sum_{i,\sigma} \hat{n}_{i\sigma},
\end{equation}
\( t \) is the hopping amplitude between nearest-neighbor sites \( \langle i, j \rangle \),
\( U \) is the on-site Coulomb repulsion energy,
\( \mu \) is the chemical potential,
\( \hat{c}_{i\sigma}^\dagger \) and \( \hat{c}_{i\sigma} \) are creation and annihilation operators for fermions with spin \( \sigma = \uparrow, \downarrow \) at site \( i \),
\( \hat{n}_{i\sigma} = \hat{c}_{i\sigma}^\dagger \hat{c}_{i\sigma} \) is the number operator for fermions with spin \( \sigma \) at site \( i \),
``h.c.'' stands for the Hermitian conjugate.

The first term represents kinetic energy due to fermions hopping between nearest-neighbor sites,
the second term represents the on-site Coulomb repulsion when two fermions with opposite spins occupy the same site,
and the last term represents the chemical potential controlling the fermion density.

\subsection{Mean-Field Approximation}

At half-filling (\(\langle \hat{n}_i \rangle = 1\)), the Hubbard model can exhibit a transition from a paramagnetic metal to an antiferromagnetic Mott insulator. To capture the antiferromagnetic order, we introduce a staggered magnetization and apply a mean-field decoupling to the interaction term.\cite{Georges1996}

\subsubsection{Magnetic Order Parameter}

Define the local magnetization:
\begin{equation}\label{DefMagnetizationAtSite}
	m_i = \frac{1}{2} \left( \langle \hat{n}_{i\uparrow} \rangle - \langle \hat{n}_{i\downarrow} \rangle \right).
\end{equation}

Assuming a bipartite lattice (e.g., a square lattice), we consider an antiferromagnetic order where the magnetization alternates between sublattices \( A \) and \( B \):
\[
m_i = \begin{cases}
	m, & i \in A, \\
	- m, & i \in B.
\end{cases}
\]

\subsubsection{Mean-Field Decoupling}

To proceed, we need to perform a mean-field decoupling of the interaction term \cite{Moriya1985}.
Write 
\[
\hat{n}_{i\sigma} = \langle \hat{n}_{i\sigma}\rangle + \delta \hat{n}_{i\sigma}
\]

Assuming that the fluctuations are small, the interaction term \(\hat{n}_{i\uparrow}\hat{n}_{i\downarrow}\) can be approximated by neglecting second-order fluctuations $ \delta \hat{n}_{i\uparrow} \delta \hat{n}_{i\downarrow} $:
\begin{equation}\label{DecoupledInteractionTerm}
	\hat{n}_{i\uparrow}\hat{n}_{i\downarrow} \approx \hat{n}_{i\uparrow}\langle \hat{n}_{i\downarrow} \rangle + \langle \hat{n}_{i\uparrow}\rangle \hat{n}_{i\downarrow} - \langle \hat{n}_{i\uparrow}\rangle\langle \hat{n}_{i\downarrow}\rangle.
\end{equation}

Defining the total density per site
\begin{equation}\label{TotalParticleDensityPerSite}
	n = \langle \hat{n}_{i\uparrow} \rangle + \langle \hat{n}_{i\downarrow} \rangle.
\end{equation}
we have:
\[
\langle \hat{n}_{i\uparrow}\rangle = \frac{n}{2} + m_i, \quad \langle \hat{n}_{i\downarrow}\rangle = \frac{n}{2} - m_i.
\]

Substitute into Eq.~(\ref{DecoupledInteractionTerm}):
\[
\hat{n}_{i\uparrow}\hat{n}_{i\downarrow} \approx \frac{n}{2}\hat{n}_i - m_i(\hat{n}_{i\uparrow} - \hat{n}_{i\downarrow}) - \left(\frac{n}{2}\right)^2 + m_i^2.
\]

This approximation linearizes the interaction term by replacing one of the operators in each product with its average value, effectively reducing the two-body interaction to single-particle terms.

\subsubsection{Mean-Field Hamiltonian}

The mean-field Hamiltonian becomes:
\[
\hat{H}_{\text{MF}} = 
- t \sum_{\langle i, j \rangle,\sigma} ( \hat{c}_{i\sigma}^\dagger \hat{c}_{j\sigma} + \text{h.c.} ) 
+ U \sum_i \left[ \frac{n}{2}\hat{n}_i - m_i(\hat{n}_{i\uparrow}-\hat{n}_{i\downarrow}) \right] 
- \mu \sum_{i,\sigma}\hat{n}_{i\sigma} 
- N U \left( \frac{n}{2} \right )^2 
+ N U m^2,
\]

Defining an effective chemical potential \(\mu_{\text{eff}} = \mu - U\frac{n}{2}\), and using \(\sigma = +1\) for spin-up ($ \uparrow $) and \(\sigma = -1\) for spin-down ($ \downarrow $), we get:
\begin{equation}\label{HamiltonianFermionicHubbardModelMF}
	\hat{H}_{\text{MF}} = 
	- t \sum_{\langle i,j\rangle,\sigma} ( \hat{c}_{i\sigma}^\dagger \hat{c}_{j\sigma} + \text{h.c.} ) 
	- \mu_{\text{eff}} \sum_{i,\sigma}\hat{n}_{i\sigma} 
	- U\sum_{i,\sigma} m_i \sigma \hat{n}_{i\sigma} 
	- N U \left( \frac{n}{2} \right )^2 
	+ N U m^2.
\end{equation}

\subsubsection{Momentum-Space Representation}

We now transform the Hamiltonian to momentum space using the Fourier transform:
\[
\hat{c}_{i\sigma} = \frac{1}{\sqrt{N}}\sum_{\mathbf{k}} e^{i\mathbf{k}\cdot \mathbf{R}_i}\hat{c}_{\mathbf{k}\sigma},
\]
where $ N $ is the number of lattice sites and $ \mathbf{R}_i $ is the position vector of site $ i $.

The kinetic (hopping) term becomes:
\[
- t \sum_{\langle i,j\rangle,\sigma} (\hat{c}_{i\sigma}^\dagger \hat{c}_{j\sigma}+\text{h.c.}) = \sum_{\mathbf{k},\sigma}\epsilon_{\mathbf{k}}\hat{c}_{\mathbf{k}\sigma}^\dagger \hat{c}_{\mathbf{k}\sigma},
\]
where
\begin{equation}\label{EnergyDispersionRelation}
	\epsilon_{\mathbf{k}} = - t \sum_{\delta} e^{ i \mathbf{k} \cdot \delta }
\end{equation}
is the energy dispersion relation and $ \delta $ is the vectors connecting a site to its nearest neighbors.

The on-site terms remain local and transform as:
$$
- \mu_{\text{eff}} \sum_{i, \sigma} \hat{n}_{i \sigma} = - \mu_{\text{eff}} \sum_{\mathbf{k}, \sigma} \hat{c}_{\mathbf{k} \sigma}^\dagger \hat{c}_{\mathbf{k} \sigma}.
$$

Similarly, the magnetization term becomes:
$$
- U \sum_{i, \sigma} m_i \sigma \hat{n}_{i \sigma} = - U \sum_{i, \sigma} m_i \sigma \hat{c}_{i \sigma}^\dagger \hat{c}_{i \sigma}.
$$

We define the spinor:
\begin{equation}\label{DefSpinor}
	\Psi_{\mathbf{k}}^\dagger = \left( \hat{c}_{\mathbf{k} \uparrow}^\dagger, \hat{c}_{\mathbf{k} + \mathbf{Q} \downarrow}^\dagger \right ),
\end{equation}
where \( \mathbf{Q} = (\pi/a, \pi/a) \) is the antiferromagnetic wave vector, such that $ e^{ i \mathbf{Q} \cdot \mathbf{R}_i } = +1 $ for $ i \in A $ and $ -1 $ for $ i \in B $.

The mean-field Hamiltonian in momentum space decouples into blocks. For a bipartite lattice, \(\epsilon_{\mathbf{k}+\mathbf{Q}} = -\epsilon_{\mathbf{k}}\).
The mean-field Hamiltonian in momentum space can be written as:
\[
\hat{H}_{\text{MF}} = \sum_{\mathbf{k}} \Psi_{\mathbf{k}}^\dagger \hat{H}_{\mathbf{k}} \Psi_{\mathbf{k}} + \text{const},
\]
where:
\[
\hat{H}_{\mathbf{k}} = 
\begin{pmatrix}
	\epsilon_{\mathbf{k}}-\mu+Um & 0 \\
	0 & -\epsilon_{\mathbf{k}}-\mu - Um
\end{pmatrix}.
\]
is the single-particle Hamiltonian matrix. Here we dropped the ``eff'' subscript of $\mu_{\text{eff}}$ for simplicity.

Its eigenvalues are:
\begin{equation}\label{EnergyEigenvalues}
	E_{\mathbf{k}}^{\pm} = \pm E_{\mathbf{k}}, \quad E_{\mathbf{k}} = \sqrt{(\epsilon_{\mathbf{k}}-\mu)^2+(Um)^2}.
\end{equation}

Let $E_0 = |\epsilon_{\mathbf{k}} - \mu|$. Then for small $(Um / E_0)^2$,
we can expand the energy eigenvalues Eq.(\ref{EnergyEigenvalues}) using the binomial (or Taylor) expansion for \(\sqrt{1+x}\approx 1 + x/2 - \cdots\):
\begin{equation}\label{EnergyEigenvaluesExpansion1}
	E_{\mathbf{k}} = E_0 \sqrt{1 + \left(\frac{Um}{E_0}\right)^2}
	\approx E_0 \left[ 1 + \frac{1}{2} \left(\frac{Um}{E_0}\right)^2 \right]
	= E_0 + \frac{(U m)^2}{2E_0}.
\end{equation}

\subsubsection{Free energy (grand potential)}

The dispersion splits into two branches $ E_{\mathbf{k}}^+ $ and $ E_{\mathbf{k}}^- $. The grand partition function is:
$$
\mathcal{Z}_{\text{GC}} 
= \mathrm{Tr}\, e^{-\beta(\hat{H}-\mu \hat{N})}
= \prod_{\mathbf{k}}\left[(1+e^{-\beta E_{\mathbf{k}}^+})(1+e^{-\beta E_{\mathbf{k}}^-})\right]
$$

From the Fermi-Dirac distribution function:
\[
f(E) = \frac{1}{1+e^{\beta E}} \Longrightarrow e^{\beta E} = \frac{1- f(E)}{f(E)}.
\]

Therefore,
$$
\mathcal{Z}_{\text{GC}} 
= \prod_{\mathbf{k}}\left[(1+e^{-\beta E_{\mathbf{k}}^+})(1+e^{-\beta E_{\mathbf{k}}^-})\right]
= \prod_{\mathbf{k}} \frac{1}{f(E_k)[1-f(E_k)]}
$$

The mean-field approximation also adds a term $N U m^2$ (up to a constant shift independent of $ m $) in the free energy from the decoupling of interaction (see Eq.(\ref{HamiltonianFermionicHubbardModelMF})).
Finally, the free energy $\Omega(m)$ at temperature $ T = 1/(k_B \beta) $ is given by:
\begin{equation}\label{FreeEnergy0}
	\Omega(m) = -\frac{1}{\beta}\ln \mathcal{Z}_{\text{GC}} = \frac{1}{\beta} \sum_{\mathbf{k}} \ln \left\{ f(E_k)[1-f(E_k)]\right\} + N U m^2.
\end{equation}

Since we are interested in the behavior right at the transition, we focus on the regime where $m$ is very small.
Express the free energy as a function of the magnetization $ m $, expand it near $ m=0 $, and identify the condition under which a nonzero $ m $ can appear spontaneously.

First, let $G(E) = \ln \left\{ f(E)[1-f(E)]\right\}$ and perform a Taylor expansion at $E_0$ with respective to \(\delta = \frac{(U m)^2}{2 E_0}\) (see Eq.(\ref{EnergyEigenvaluesExpansion1})):
\[
G(E_0+\delta) = G(E_0) + \delta G'(E_0) + \frac{\delta^2}{2} G''(E_0) + \cdots
\]

The first derivative:
\[
G'(E) = - \beta \left(1 - 2 f(E)\right).
\]
where we used equation
$$
\frac{d f(E)}{d E} = -\beta f(E)[1 - f(E)].
$$

Therefore,
\[
G(E_0+\delta) \approx G(E_0) - \frac{(U m)^2}{2 E_0} \beta \left[1 - 2 f(E_0)\right].
\]

Substituting into Eq.(\ref{FreeEnergy0}), we have:
\begin{equation}\label{FreeEnergy1}
	\Omega(m) = \frac{1}{\beta} \sum_{\mathbf{k}} G(E_0) + m^2 \left\{ NU - \sum_{\mathbf{k}} \frac{U^2}{2 E_0} \left[1 - 2 f(E_0)\right] \right\} + \cdots
\end{equation}

\subsubsection{Landau Free Energy Expansion and Critical Hopping Amplitude}

On the other hand, according to the Landau theory, we can expand the free energy in powers of $ m $ as:
\begin{equation}\label{FreeEnergyLandauExpansion}
	\Omega(m) = \Omega(0) + \frac{1}{2}A m^2 + \frac{1}{4}B m^4 + \cdots
\end{equation}

According to the critical hopping rate criterion, the fermionic lattice melting occurs when the hopping amplitude \(t\) reaches a critical value \(t_c\), beyond which the system becomes thermodynamically unstable to form a nonzero order parameter.

The system becomes unstable towards developing a nonzero $ m $ if the quadratic coefficient $ A $ changes sign from positive to negative.
The instability occurs at $A = 0$.
Comparing Eq.(\ref{FreeEnergy1})Eq.(\ref{FreeEnergyLandauExpansion}), we have
\begin{equation}\label{QuadraticCoefficientAtCriticalPoint}
	A = NU - \sum_{\mathbf{k}} \frac{U^2}{2 E_0} \left[1 - 2 f(E_0)\right] = 0
\end{equation}

Rearrange this, we obtain the condition for critical hopping amplitude $t_c(T)$:
\begin{equation}\label{CriticalConditionFortc1}
	\boxed{\frac{2}{U} = \frac{1}{N } \sum_{\mathbf{k}} \frac{1 - 2 f(E_0)}{E_0} = \frac{1}{N} \sum_{\mathbf{k}} \frac{1 - 2 f(|\epsilon_{\mathbf{k}}-\mu|)}{|\epsilon_{\mathbf{k}}-\mu|}}.
\end{equation}

\subsubsection{Density of States and Scaling}

Define the density of states (DOS):
\[
D(\epsilon) = \frac{1}{N}\sum_{\mathbf{k}}\delta(\epsilon-\epsilon_{\mathbf{k}}).
\]

Because \(\epsilon_{\mathbf{k}}\) scales linearly with \( t \), we write \(\epsilon = t x\), leading to:
\[
D(\epsilon) = \frac{1}{t}D_0\left(\frac{\epsilon}{t}\right),
\]
where \( D_0(x) \) is the DOS for a scaled energy (already unrelated to $t$).

In the thermodynamic limit, replace the sum over \(\mathbf{k}\) by an integral over energies:
\[
\frac{1}{N}\sum_{\mathbf{k}}(\cdots) \rightarrow \int_{-W}^{W} d\epsilon\, D(\epsilon) (\cdots),
\]
where \(W\) is the half-bandwidth.

Substituting into the critical condition Eq(\ref{CriticalConditionFortc1}):
\[
\frac{2}{U} = \int_{-W}^{W}d\epsilon\,D(\epsilon)\frac{1 - 2f(|\epsilon|)}{|\epsilon|}.
\]

In dimensionless variables \( x = \epsilon/t \):
\begin{equation}\label{CriticalConditionFortc2}
	\boxed{\frac{t_c}{U} = \int_0^{W/t_c} dx\,\frac{D_0(x)}{x} [1 - 2 f(x t_c)]}.
\end{equation}

This implicit equation determines \( t_c(T) \). The solution for \( t_c \) usually requires numerical methods for a given density of states.

\subsubsection{Melting Temperature $T_m$ in the Strong-Coupling Limit}

When \(U \gg t\), the Hubbard model at half-filling maps onto a Heisenberg antiferromagnet \cite{Anderson1959,Manousakis1991,Sandvik2010} with exchange \(J \approx \tfrac{4t^2}{U}\).
The Néel temperature \(T_m\) (the temperature at which AF order vanishes) scales roughly with \(J\). Thus,
\[
k_B\,T_m \;\sim\; \alpha\,J \;=\; \alpha \,\frac{4\,t^2}{U},
\]
so
\[
T_m \;\approx\; \frac{4\,\alpha\,t^2}{U\,k_B}.
\]
In three-dimensional bipartite lattices, \(\alpha\approx 0.95\) is often quoted based on numerical studies of the Heisenberg model.
This relation highlights the interplay between the kinetic scale (\(\sim t\)) and the interaction scale (\(\sim U\)): increasing \(U\) lowers \(T_m\), reflecting that strong onsite repulsion suppresses kinetic delocalization.

The melting temperature \( T_m \) is defined by the point where the antiferromagnetic order vanishes at the critical hopping \( t_c(T_m) \). In the strong-coupling limit (\( U \gg t \)), the Hubbard model at half-filling maps onto the Heisenberg model with exchange coupling \( J = 4 t^2/U \).

The Néel temperature \( T_m \) scales with \( J \):
\[
k_B T_m \approx \alpha J = \alpha \frac{4 t_c^2}{U}.
\]
Thus:
\[
T_m = \frac{4\alpha t_c^2}{U k_B}.
\]

For a 3D cubic lattice, \(\alpha \approx 0.946\), giving:
\[
T_m \approx \frac{3.78 t_c^2}{U k_B}.
\]

This relation shows that the melting temperature decreases with increasing \( U \) and increases with \( t_c^2 \), reflecting the interplay between kinetic and interaction energy scales.

\subsection{Renormalization Group (RG) Corrections}

The mean-field (MF) analysis in the preceding sections gives a ``bare'' (unrenormalized) estimate for the critical hopping amplitude \( t_c(T) \) and the melting temperature \( T_m \).
In the mean-field approach, we replaced interaction terms by their average values (mean fields). While this yields a good first approximation - particularly in higher dimensions - critical points are strongly influenced by fluctuations of the order parameter near the transition.

However, near criticality, fluctuations of the order parameter can alter these values.
Renormalization group (RG) methods offer a systematic way to include these fluctuation effects. 
The RG procedure ``integrates out'' fast (short-wavelength) fluctuations step by step, rescaling the system at each stage.
This process yields flow equations for the effective coupling constants. Solving these equations (even approximately) shows how the parameters \((t, U, T)\) are renormalized as we approach the critical point.
Below, we sketch how to implement RG corrections to the mean-field results.

\subsubsection{Landau-Ginzburg-Wilson (LGW) Free Energy}

In practice, implementing RG requires identifying an effective Landau-Ginzburg-Wilson (LGW) free energy functional \cite{Chaikin2000,Goldenfeld2018,ShankarRG1994} for the antiferromagnetic (AF) order parameter.
We then integrate out fluctuations in momentum shells, see how the ``mass'' (which determines whether the system is ordered or disordered) flows, and find the corrected condition for criticality.
This procedure leads to shifts in \( t_c(T) \) and \( T_m \) compared to their mean-field values.

Near the AF transition, the order parameter \(m(\mathbf{r})\) is a slowly varying field (the staggered magnetization).
We write a coarse-grained free energy functional:
\[
\mathcal{F}[m] \;=\; \int d^d r \,\Biggl[
\tfrac12 A_0\,m^2(\mathbf{r})
\;+\; \tfrac12 Z_0 \bigl(\nabla m(\mathbf{r})\bigr)^2
\;+\; \tfrac{1}{4!} B_0\,m^4(\mathbf{r})
\;+\;\dots
\Biggr],
\]
where
\(A_0\) is the ``bare mass'' parameter: in mean-field theory, \(A_0 \propto t - t_c(T)\). It vanishes at the mean-field critical point,
\(Z_0\) is the ``bare stiffness'' (gradient term),
and \(B_0\) is a quartic coupling that controls fluctuations.

These coefficients - often called ``bare'' parameters - can be extracted from the microscopic Hubbard model once we identify the AF order parameter.
For instance, from the mean-field analysis, the quadratic coefficient $A$ vanishes at the mean-field critical line \(t = t_c^{(0)}(T)\) (see Eq.(\ref{QuadraticCoefficientAtCriticalPoint})).

\subsubsection{RG Procedure: Integrating Out Fast Modes}

The Wilsonian RG approach proceeds by:
\begin{enumerate}
	\item Defining a momentum shell. We integrate out Fourier components of \(m(\mathbf{r})\) with wavevectors \(\Lambda/b < |\mathbf{k}| < \Lambda\), where \(\Lambda\) is an ultraviolet (UV) cutoff (on the order of the inverse lattice spacing), and \(b>1\) is a rescaling factor.
	
	\item Renormalizing parameters. This integration changes \(A_0, Z_0, B_0\) into new ``effective'' couplings \(A_0^\prime, Z_0^\prime, B_0^\prime\).
	
	\item Rescaling. We then rescale lengths \(\mathbf{r}\to b\,\mathbf{r}\) and fields \(m(\mathbf{r})\to b^{(d+2-\eta)/2}\,m\bigl(\mathbf{r}\bigr)\), where \(\eta\) is the anomalous dimension. The result is a new free energy of the same form but with updated coefficients. Repeating this process iteratively produces ``flow equations'' for the couplings as functions of the RG scale \(l = \ln b\).
\end{enumerate}

Conceptually, at the critical point, the mass parameter \(A(l)\) flows to zero as \(l\to\infty\).
This ``fixed point'' condition yields the renormalized critical line in terms of the original (bare) parameters \((t, U, T)\).

\subsubsection{One-Loop RG Flow Equations}

To one-loop order in a \(\phi^4\)-type theory (the standard LGW theory for an Ising-like order parameter in \(d\) dimensions), the flow equations near the Gaussian fixed point often take the form:
\[
\frac{d A}{d l} \;=\; \bigl(2 - \eta\bigr)\,A \;-\; C_A\,B\,I_A,  
\quad
\frac{d B}{d l} \;=\; \bigl(4 - d - \eta\bigr)\,B \;-\; C_B\,B^2\,I_B,
\quad
\frac{d Z}{d l} \;=\; \eta\,Z.
\]
where
\(A, B, Z\) are the renormalized mass, coupling, and stiffness, respectively,
\(I_A, I_B\) are one-loop momentum-shell integrals (they depend on \(\Lambda\) and on the running mass \(A\)),
and \(C_A, C_B\) are numerical constants depending on dimensionality.

The anomalous dimension \(\eta\) is determined self-consistently from the fixed-point conditions. In many common scenarios (e.g., near \(d=4\)), \(\eta\) can be small, and we set \(\eta\approx 0\) at one-loop. While the full derivation of these flow equations can be lengthy, the key takeaway is that \(A\) flows to zero at criticality, giving the location of the phase boundary when fluctuations are included.

\subsubsection{Corrections to the Critical Hopping Amplitude $t_c(T)$}

Referring to the critical condition Eq.(\ref{CriticalConditionFortc2}), we include RG corrections:
\begin{enumerate}
	\item The on-site repulsion \(U\) and the hopping \(t\) are replaced by their effective (running) values \(U_{\mathrm{eff}}(l), t_{\mathrm{eff}}(l)\) at a large scale \(l^*\).
	
	\item The DOS \(D_0(x)\) is replaced by a slightly renormalized DOS \(D_{\mathrm{eff}}(x;l)\) (since the band structure and the effective dispersion can be modified by fluctuations).
	
	\item The Fermi function \(f(\dots)\) also depends on the renormalized temperature \(T_{\mathrm{eff}}(l)\).
\end{enumerate}

Hence, the fluctuation-corrected critical condition takes a schematic form:
\[
\frac{t_{\mathrm{eff}}(l^*)}{\,U_{\mathrm{eff}}(l^*)\,} 
\;=\; 
\int_0^{W/t_{\mathrm{eff}}(l^*)} 
dx\,\frac{D_{\mathrm{eff}}(x;l^*)}{x}\,\bigl[1 - 2\,f\bigl(x\,t_{\mathrm{eff}}(l^*)\bigr)\bigr].
\]
Solving this equation for \(\,t_{\mathrm{eff}}(l^*)\) gives the renormalized critical hopping amplitude:
\[
t_c(T) 
\;=\;
t_c^{(0)}(T) \;+\; \delta t_c(T),
\]
where
\[
\delta t_c(T)
\;\approx\;
t_c^{(0)}(T)\,\Bigl[\alpha_U\bigl(U,t_c^{(0)},T\bigr)\;-\;\alpha_t\bigl(U,t_c^{(0)},T\bigr)\Bigr],
\]
and the functions \(\alpha_U,\alpha_t\) are integrals stemming from the one-loop RG diagrams.
The net effect: at finite dimension \(d\), the RG typically modifies the mean-field boundary by a multiplicative correction (often a few percent to tens of percent, depending on dimension and distance from the upper critical dimension).

\section{Melting Criterion for Bosonic Lattices}
\label{MeltingCriterionBosonicLattices}

In this section, we study the melting transition of a bosonic lattice using the Bose-Hubbard model first introduced in \cite{Gersch1963}.
This model has been extensively analyzed in the context of superfluid-Mott transitions \cite{Fisher1989,Elstner1999,Kuhner1998}, and realized experimentally in ultracold atomic gases \cite{Jaksch1998,Greiner2002,Bloch2008}.

The critical hopping rate $ t_c $ is then the parameter at which the system transitions from a Mott insulator to a superfluid phase.
Traditional mean-field theory provides initial insights but often lacks precision near critical points due to the neglect of fluctuations.
By incorporating renormalization group (RG) techniques, we can account for these fluctuations and improve the accuracy of $ t_c $ and the melting temperature $ T_m $.

We will derive the expressions of critical hopping rate \( t_c(T) \) for the Bose-Hubbard model at both finite temperature ($ T > 0 $) and zero temperature ($ T = 0 $).
We'll also derive the expression for the melting temperature $ T_m $.

\subsection{Bose-Hubbard Model}

The Bose-Hubbard model describes interacting bosons on a lattice and is given by the Hamiltonian:
\begin{equation}\label{BoseHubbardHamiltonian0}
	\hat{H} = -t \sum_{\langle i, j \rangle} ( \hat{a}_i^\dagger \hat{a}_j + \hat{a}_j^\dagger \hat{a}_i ) + \frac{U}{2} \sum_i \hat{n}_i ( \hat{n}_i - 1 ) - \mu \sum_i \hat{n}_i,
\end{equation}
where
\( t \) is the hopping amplitude between nearest-neighbor sites \( \langle i, j \rangle \),
\( U \) is the on-site interaction strength,
\( \mu \) is the chemical potential,
\( \hat{a}_i^\dagger \) and \( \hat{a}_i \) are bosonic creation and annihilation operators at site \( i \),
and \( \hat{n}_i = \hat{a}_i^\dagger \hat{a}_i \) is the number operator at site \( i \).

\subsection{Mean-Field Approximation}

\subsubsection{Superfluid Order Parameter \( \psi \)}

We use mean-field approximation to simplify the calculation.
Define the superfluid order parameter \( \psi \) as the expectation value of the annihilation operator:
\[
\psi = \langle \hat{a}_i \rangle.
\]

$\psi$ measures the coherent superfluid order in the system. We can choose $\psi$ to be real.
In the Mott insulator, $\psi = 0$ because particles are localized, and there is no phase coherence.
A non-zero $\psi$ indicates delocalization and long-range phase coherence, which are hallmarks of the superfluid phase.

\subsubsection{Mean-field decoupling}

The mean-field decoupling is a technique used to simplify interacting many-body systems by approximating interaction terms with effective single-particle terms.
Let us first decouple the hopping term: rewrite the hopping term in a way that is easier to handle, especially when calculating expectation values or deriving self-consistency equations.

Assuming a uniform system, \( \psi \) is the same at every site.
Express operators as fluctuations around the mean: write the annihilation and creation operators as the sum of their mean value and fluctuations:
\[
\hat{a}_i = \psi + (\hat{a}_i - \psi), \quad \hat{a}_i^\dagger = \psi^* + (\hat{a}_i^\dagger - \psi^*).
\]
where \( \hat{a}_i - \psi \) represents the fluctuation at site \( i \) and has a zero expectation value: $\langle \hat{a}_i - \psi \rangle = 0$.

Consider the product \( \hat{a}_i^\dagger \hat{a}_j \):
\[
\hat{a}_i^\dagger \hat{a}_j = (\psi^* + \delta \hat{a}_i^\dagger)(\psi + \delta \hat{a}_j) = \psi^* \psi + \psi^* \delta \hat{a}_j + \delta \hat{a}_i^\dagger \psi + \delta \hat{a}_i^\dagger \delta \hat{a}_j,
\]
where \( \delta \hat{a}_i^\dagger = \hat{a}_i^\dagger - \psi^* \) and \( \delta \hat{a}_j = \hat{a}_j - \psi \).

\( \psi^* \psi \) is a constant term representing the product of mean fields.
\( \psi^* \delta \hat{a}_j \) and \( \delta \hat{a}_i^\dagger \psi \) are terms linear in fluctuations.
\( \delta \hat{a}_i^\dagger \delta \hat{a}_j \) is a term quadratic in fluctuations.
In the mean-field approximation, fluctuations are considered small. We neglect the quadratic term \( \delta \hat{a}_i^\dagger \delta \hat{a}_j \), as it involves the product of two small quantities.
The approximation becomes:
\begin{align}
	\begin{aligned}
		\hat{a}_i^\dagger \hat{a}_j 
		&\approx \psi^* \psi + \psi^* (\hat{a}_j - \psi) + (\hat{a}_i^\dagger - \psi^*) \psi \\
		&= \psi^* \psi + \psi^* \hat{a}_j - \psi^* \psi + \hat{a}_i^\dagger \psi - \psi^* \psi \\
		&= \hat{a}_i^\dagger \psi + \psi^* \hat{a}_j - |\psi|^2,
	\end{aligned}
\end{align}

\( \hat{a}_i^\dagger \psi \) and \( \psi^* \hat{a}_j \) represent the interaction of an individual particle with the average field \( \psi \).
\( |\psi|^2 \) is a constant term that can often be absorbed into the overall energy or treated separately.
This approximation transforms a two-operator term into terms involving only single operators, greatly simplifying calculations in the mean-field framework.

\subsubsection{Mean-field Hamiltonian}

Substituting back, the mean-field Hamiltonian becomes:
$$
\hat{H}^{\text{MF}} = \sum_i \left[ -t z (\psi^* \hat{a}_i + \psi \hat{a}_i^\dagger) + \frac{U}{2} \hat{n}_i(\hat{n}_i - 1) - \mu \hat{n}_i + t z |\psi|^2 \right],
$$
where
$ z $ is coordination number.

We focus on the Bose-Hubbard model's mean-field Hamiltonian for a single site $ i $:
\begin{equation}\label{BoseHubbardMeanFieldHamiltonianSingleSite}
	\hat{H}_i^{\text{MF}} = -t z (\psi^* \hat{a}_i + \psi \hat{a}_i^\dagger) + \frac{U}{2} \hat{n}_i(\hat{n}_i - 1) - \mu \hat{n}_i  + t z |\psi|^2
\end{equation}

When diagonalizing the single-site Hamiltonian or calculating the order parameter, the constant term \( t z |\psi|^2 \) can be neglected because it does not influence the minimization process. However, this term still contributes to the system's total energy.

\subsubsection{Perturbative Expansion}

Near the critical point, $ \psi \to 0 $, so we can consider $ \psi $ as a small parameter and linearize the equations: we can treat the hopping term linear in \( \psi \) as perturbation.

\begin{itemize}
	\item Unperturbed Hamiltonian:
	
	\[
	\hat{H}_0 = \frac{U}{2} \hat{n}_i ( \hat{n}_i - 1 ) - \mu \hat{n}_i.
	\]
	whose eigenstates are $ |n\rangle $ with eigenvalues (energies):
	\begin{equation}\label{UnperturbedEnergyEigenvalues}
		E_n^{(0)} = \frac{U}{2} n(n - 1) - \mu n
	\end{equation}
	
	\item Perturbation:
	
	\[
	\hat{V} = - t z ( \psi^* \hat{a}_i + \psi \hat{a}_i^\dagger ) + t z |\psi|^2.
	\]
\end{itemize}

We will compute the expectation value \( \psi = \langle \hat{a}_i \rangle \) up to second order in \( \psi \).

\subsubsection{Critical Hopping Rate \( t_c(T) \)}

Increasing the temperature introduces additional thermal excitations that facilitate particle hopping, effectively lowering the barrier for particles to escape the localized state. At finite temperatures, a competition arises between thermal fluctuations, which promote delocalization, and strong interactions that can still localize particles. Consequently, both the critical hopping amplitude \( t_c \) and the melting temperature \( T_m \) are influenced by a combination of quantum and thermal fluctuations.

In this work, we refer to \(T_m\) as the temperature at which the system transitions from a localized (Mott) phase to a delocalized (superfluid) phase in the Bose-Hubbard model.  Concretely, we define \(T_m\) by the condition that the superfluid order parameter \(\psi\) first becomes nonzero at any finite hopping \(t > 0\).
Equivalently, this corresponds to the boundary in the \(t\text{--}T\) plane separating insulating and superfluid behavior.  
Although we call it a ``melting temperature'' in analogy with classical solids, one should keep in mind that this signals the disappearance of the Mott gap and the emergence of a quantum superfluid, not a classical liquid.

In our case, since we're computing the expectation value in a thermal ensemble, we need to sum over all states, each weighted by its Boltzmann factor \( e^{ - \beta E_n^{(0)} } \).
Define the occupation probabilities:
\begin{equation}\label{OccupationProbabilities}
	P_n = \frac{ e^{ - \beta E_n^{(0)} } }{ Z_0 }.
\end{equation}
where
\begin{equation}\label{UnperturbedPartitionFunction}
	Z_0 = \sum_{n=0}^{\infty} e^{ - \beta E_n^{(0)} }
\end{equation}

\subsubsection{First-Order Correction to \( \psi \)}

The first-order correction to \( \psi \) is:
\[
\psi = \sum_n P_n \langle  n^{(0)} | \hat{a}_i |  n^{(0)} \rangle = 0,
\]
since \( \langle  n^{(0)} | \hat{a}_i |  n^{(0)} \rangle = 0 \), where $|  n^{(0)} \rangle$ is the eigenstate of unperturbed Hamiltonian $\hat{H}_0$.

The first-order correction \( | n^{(1)} \rangle \) to the state \( | n^{(0)} \rangle \) is given by:
\begin{equation}\label{FirstOrderCorrectionToStateN0}
	| n^{(1)} \rangle = \sum_{ m \neq n } \frac{ \langle m^{(0)} | \hat{V} | n^{(0)} \rangle }{ E_n^{(0)} - E_m^{(0)} } | m^{(0)} \rangle.
\end{equation}

\subsubsection{Second-Order Correction to \( \psi \)}

The leading non-zero contribution comes from second-order perturbation theory:

\[
\psi = \sum_{ n } P_n \left( \langle n^{(0)} | \hat{a}_i | n^{(1)} \rangle + \langle n^{(1)} | \hat{a}_i | n^{(0)} \rangle \right ),
\]

Substituting Eq.(\ref{FirstOrderCorrectionToStateN0}), we obtain the expectation value:
\[
\psi = \sum_{ n } P_n \left( \sum_{ m \neq n } \frac{ \langle n^{(0)} | \hat{a}_i | m^{(0)} \rangle \langle m^{(0)} | \hat{V} | n^{(0)} \rangle }{ E_n^{(0)} - E_m^{(0)} } + \sum_{ m \neq n } \frac{ \langle n^{(0)} | \hat{V} | m^{(0)} \rangle \langle m^{(0)} | \hat{a}_i | n^{(0)} \rangle }{ E_n^{(0)} - E_m^{(0)} } \right ).
\]

However, since \( \hat{V} \) and \( \hat{a}_i \) involve \( \hat{a}_i \) and \( \hat{a}_i^\dagger \), the non-zero contributions are from states where \( m = n \pm 1 \).
Therefore, we have
\[
\psi = \sum_n P_n \left[ \frac{ \langle n | \hat{V} | n + 1 \rangle \langle n + 1 | \hat{a}_i | n \rangle }{ E_n^{(0)} - E_{ n + 1 }^{(0)} } + \frac{ \langle n | \hat{V} | n - 1 \rangle \langle n - 1 | \hat{a}_i | n \rangle }{ E_n^{(0)} - E_{ n - 1 }^{(0)} } + \text{c.c.} \right ].
\]

The relevant matrix elements are:
\begin{itemize}
	\item For \( m = n + 1 \):
	\begin{align}
		\begin{aligned}
			& \langle n | \hat{V} | n + 1 \rangle = - t z \psi^* \sqrt{ n + 1 } \\
			& \langle n + 1 | \hat{a}_i | n \rangle = \sqrt{ n + 1 } \\
			& E_n^{(0)} - E_{ n + 1 }^{(0)} = - ( U n - \mu )
		\end{aligned}
	\end{align}
	
	\item For \( m = n - 1 \):
	\begin{align}
		\begin{aligned}
			& \langle n | \hat{V} | n - 1 \rangle = - t z \psi \sqrt{ n } \\
			& \langle n - 1 | \hat{a}_i | n \rangle = \sqrt{ n } \\
			& E_n^{(0)} - E_{ n - 1 }^{(0)} = U ( n - 1 ) - \mu
		\end{aligned}
	\end{align}	
\end{itemize}

The first term:
\[
\frac{ \langle n | \hat{V} | n + 1 \rangle \langle n + 1 | \hat{a}_i | n \rangle }{ E_n^{(0)} - E_{ n + 1 }^{(0)} } 
= \frac{ - t z \psi^* \sqrt{ n + 1 } \cdot \sqrt{ n + 1 } }{ - ( U n - \mu ) } 
= \frac{ t z \psi^* ( n + 1 ) }{ U n - \mu },
\]

The second term:
\[
\frac{ \langle n | \hat{V} | n - 1 \rangle \langle n - 1 | \hat{a}_i | n \rangle }{ E_n^{(0)} - E_{ n - 1 }^{(0)} } 
= \frac{ - t z \psi \sqrt{ n } \cdot \sqrt{ n } }{ U ( n - 1 ) - \mu } 
= \frac{ - t z \psi n }{ U ( n - 1 ) - \mu },
\]

Combining the two terms and simplifying:
\[
\psi = \sum_n P_n \left[ \frac{ t z \psi^* ( n + 1 ) }{ U n - \mu } - \frac{ t z \psi n }{ U ( n - 1 ) - \mu } + \text{c.c.} \right ].
\]

\( \psi^* \) and \( \psi \) are complex conjugates. Assuming \( \psi \) is real, we can write:
\[
\psi = 2 t z \psi \sum_n P_n \left[ \frac{ n + 1 }{ U n - \mu } - \frac{ n }{ U ( n - 1 ) - \mu } \right].
\]

Bring all terms to one side and factor out \( \psi \), we obtain the following self-consistency equation:
\begin{equation}\label{PhaseTransitionEquation}
	\psi \left\{ 1 - 2 t z \sum_n P_n \left[ \frac{ n + 1 }{ U n - \mu } - \frac{ n }{ U ( n - 1 ) - \mu } \right] \right\} = 0.
\end{equation}

Let us now analyze the physical meaning of Eq.(\ref{PhaseTransitionEquation}).
Applying the critical hopping rate criterion: the bosonic lattice melting occurs when the hopping rate $t$ reaches a critical value $t_c$, beyond which the system becomes dynamically unstable.
Specifically:

\begin{itemize}
	\item Below the Critical Point (\( t < t_c \))
	
	The Mott insulating phase is stable and there is no superfluid phase.
	The only solution is \( \psi = 0 \) because $\psi = \langle \hat{a}_i \rangle$ is superfluid order parameter.

	\item At the Critical Point (\( t = t_c \))
	
	The system is at the onset of instability. The superfluid phase appears and a non-zero \( \psi \ne 0 \) becomes possible. 
	In order for Eq.(\ref{PhaseTransitionEquation}) to hold, the coefficient must be zero:
	\[
	1 - 2 t_c(T) z \sum_n P_n \left[ \frac{ n + 1 }{ U n - \mu } - \frac{ n }{ U ( n - 1 ) - \mu } \right] = 0
	\]
\end{itemize}

Therefore, at the melting temperature $T_m$, the critical hopping rate $t_c(T_m)$ (solution of Eq.(\ref{PhaseTransitionEquation})) is:
\begin{equation}\label{CriticalHoppingRateAtFiniteTemperature}
	\boxed{t_c(T_m)  = \frac{1}{2 z \sum\limits_n P_n \left[ \frac{ n + 1 }{ U n - \mu } - \frac{ n }{ U ( n - 1 ) - \mu } \right]}}.
\end{equation}

\subsubsection{Critical Hopping Rate at \( T = 0 \) K}

Here, thermal fluctuations are absent, and the primary mechanism for bosons to ``hop'' and delocalize is quantum tunneling.
Consequently, the transition from Mott insulator to superfluid at \(T=0\) is a quantum phase transition, driven by the ratio \(t/U\).

For bosons, at \( T = 0 \) K, only the ground state with \( n = n_0 \) is occupied (\( P_{ n_0 } = 1 \)).
The critical hopping rate simplifies to:
\begin{equation}
	t_c(0)  = \frac{1}{2 z \left[ \frac{ n_0 + 1 }{ U n_0 - \mu } - \frac{ n_0 }{ U ( n_0 - 1 ) - \mu } \right]}
\end{equation}

This indicates the existence of superfluid phase at zero temperature.

\subsection{Renormalization Group (RG) Corrections}

Mean-field theory provides an intuitive starting point for estimating the critical hopping amplitude \( t_c \) and melting temperature \( T_m \) in the Bose-Hubbard model.
However, mean-field ignores spatial and temporal fluctuations, which become especially relevant near criticality and in low dimensions.\cite{Elstner1999,Kuhner1998,Sansone2008}
In this subsection, we outline how renormalization group (RG) methods \cite{Donner2007,Scalettar1991} systematically account for these fluctuations and lead to quantitative corrections to \( t_c \) and \( T_m \).

\subsubsection{Effective Action and RG Setup}

A convenient way to study the Mott-superfluid (SF) transition from a field-theoretic perspective is to move from the Bose-Hubbard Hamiltonian to a path-integral or imaginary-time formulation \cite{Sachdev2011}.  After a Hubbard-Stratonovich transformation or a direct expansion in the superfluid order parameter \(\psi(\mathbf{r},\tau)\), one obtains an effective action of the form
\[
S_{\mathrm{eff}}[\psi] \;=\; \int \! d^d r \int_0^\beta d\tau \;\Bigl[\,
\psi^*(\partial_\tau - \mu_\mathrm{eff})\,\psi
\;+\; \kappa\,|\nabla \psi|^2 
\;+\; r_0\,|\psi|^2 
\;+\; g\,|\psi|^4 
\;+\;\cdots
\Bigr],
\]
where
\( \beta = 1/(k_B T) \) is the inverse temperature,
\( \mu_\mathrm{eff} \) is an effective chemical potential that may absorb some constants from the original Hamiltonian,
\( r_0 \) is the tuning parameter whose sign determines whether \(\psi\) is nonzero (SF phase) or zero (Mott phase) at the saddle point,
\( \kappa \) sets the kinetic energy of phase fluctuations (related to the hopping \(t\)),
\( g\) is the quartic self-interaction coupling emerging from bosonic on-site repulsion \(U\).

Near the Mott-SF transition, the coefficient \( r_0 \) vanishes at criticality.  In a more microscopic approach (e.g., strong-coupling expansion near the tip of the Mott lobe), one identifies \( r_0 \propto (t - t_c^{(\mathrm{MF})}) \).  The mean-field solution is found by minimizing \( r_0\,|\psi|^2 + g\,|\psi|^4/2 + \cdots \), ignoring fluctuations around \(\psi(\mathbf{r},\tau)\).  However, large phase (and amplitude) fluctuations strongly modify this picture as one approaches the transition in low dimensions or at finite \(T\).

\subsubsection{Momentum-Shell RG Flow}

One classic RG approach integrates out the high-momentum (short-wavelength) modes of \(\psi(\mathbf{r},\tau)\) shell by shell.  We briefly sketch the steps:
\begin{enumerate}
	\item Decompose \(\psi\) into slow modes \(\psi_{<}\) (momenta \(|\mathbf{k}|\leq \Lambda/b\)) and fast modes \(\psi_{>}\) (momenta \(\Lambda/b < |\mathbf{k}| \leq \Lambda\)), where \(\Lambda\) is an ultraviolet cutoff set by the lattice spacing, and \(b>1\) is the rescaling factor.
	
	\item Integrate out fast modes \(\psi_{>}\) in the partition function:
	\[
	Z \;=\; \int \! \mathcal{D}\psi_{<}\,\mathcal{D}\psi_{>}\;\,
	e^{-S_{\mathrm{eff}}[\psi_{<}+\psi_{>}]} 
	\;\to\; 
	\int \! \mathcal{D}\psi_{<}\;\,e^{-S_{\mathrm{eff}}^{\prime}[\psi_{<}]},
	\]
	which produces an effective action \(S_{\mathrm{eff}}^{\prime}\) solely in terms of \(\psi_{<}\).  
	
	\item Rescale coordinates (\(\mathbf{r}\to \mathbf{r}/b\)) and imaginary time (\(\tau \to \tau/b^z\), where \(z\) is the dynamical critical exponent) to restore the original cutoff \(\Lambda\).  Simultaneously, rescale the field \(\psi\to b^{\Delta}\psi\) to keep the kinetic term in canonical form.
\end{enumerate}

At each RG step, the parameters \((r_0, g, \kappa, \ldots)\) flow according to differential equations of the form
\[
\frac{d r_0}{d \ell} \;=\; \beta_{r}(r_0, g, \ldots), 
\quad
\frac{d g}{d \ell} \;=\; \beta_{g}(r_0, g, \ldots),
\quad
\cdots 
\]
where \(\ell = \ln b\).  The transition occurs where \(r_0\) flows to zero and the system is scale-invariant.
\begin{itemize}
	\item In \(d>2\), the stable critical fixed point often matches the 3D \(XY\) (or \(\phi^4\)) universality class.  The coupling \(g\) may flow to a nontrivial value \(g^*\).
	
	\item In \(d=2\) or \(d=1\), strong fluctuations or topological excitations can significantly shift the transition relative to mean-field predictions.
\end{itemize}

\subsubsection{One-Loop Correction to the Critical Line $t_c(T)$}

A more direct, ``strong-coupling'' RG expansion uses the Bose-Hubbard Hamiltonian itself, focusing on the Mott insulator regime and treating the kinetic term (hopping) as a perturbation.
At the leading (one-loop) order, one often finds that the mean-field critical line \(t_c^\mathrm{(MF)}(T)\) is shifted upward \cite{Sansone2008}, i.e.,
\[
t_c^\mathrm{(RG)}(T)
\;=\;
t_c^\mathrm{(MF)}(T)\,\Bigl[1 + \alpha_1 + \cdots \Bigr],
\]
where \(\alpha_1>0\) depends on:
\begin{itemize}
	\item Dimension \(d\): Stronger infrared fluctuations in lower \(d\) enhance corrections.  
	\item On-site interaction \(U\) and chemical potential \(\mu\): These set the gap in the Mott state and control how easily particles/holes are created by hopping.
	\item Temperature \(T\): Thermal fluctuations can further assist hopping.  Near \(T=0\), quantum fluctuations dominate, while at higher \(T\), the thermal population of excited states modifies the flow of coupling constants.
\end{itemize}

In many references (e.g., Refs. [1,2]), the leading one-loop diagrammatic correction is found by evaluating bubble or tadpole diagrams associated with the boson propagators.  The net effect can be viewed as a renormalization of the effective mass of the bosonic field, shifting the location of the gap closure point.  

Physically, this means that mean-field underestimates how quickly the Mott gap closes when fluctuations are properly accounted for.
One might interpret this as the system needing a slightly larger hopping \(t\) to overcome the residual short-range correlations that pin the particles in place.

\subsubsection{Finite Temperature and the ``Melting'' of the Mott Phase}

At nonzero temperature ($T>0$), the path-integral extends over a finite imaginary time \(\beta\).
The effective dimensionality can shift from \(d+1\) (quantum) to something more classical if \(\beta U \ll 1\). 
Hence, the interplay of quantum and thermal fluctuations modifies the scale at which interactions become relevant or irrelevant.\cite{Fisher1989}

One way to combine RG with the mean-field approach is to replace the mean-field expression for \(t_c(T)\) by
\[
t_c^{(\mathrm{RG})}(T)
\;=\;
t_c^{(\mathrm{MF})}(T)\,
\bigl[1 + \alpha_1\,F(T) + \cdots\bigr],
\]
where \(F(T)\) is a function capturing the temperature dependence of the fluctuation correction.  Some approximate forms (e.g., \(F(T) \approx 1 - e^{-\beta U}\)) capture how the Mott gap is thermally smeared for moderate temperatures.

Recall that we define \(T_m\) as the temperature at which the Mott state is destroyed (the superfluid emerges) at an arbitrarily small \(t>0\).  In an RG language, one would show that the flow of the mass term \(r_0\) changes sign at any \(t>0\) beyond a certain temperature threshold.  Thus, more refined RG treatments show that \(T_m\) is not just a naive mean-field limit: fluctuations can either lower or raise the boundary, depending on dimension and other parameters, typically shifting it from the mean-field curve.

\subsubsection{Quantitative Estimates and Universality Classes}

\begin{itemize}
	\item In 3D:  
	The Mott-SF transition generally belongs to the 3D \(XY\) universality class if one focuses on the particle-hole symmetric point.  RG corrections are relatively smaller (often 5-15\% shift from mean-field in typical 3D optical lattice experiments).
	
	\item In 2D:  
	The transition is in the 2D \(XY\) universality class, with more substantial fluctuations.  Vortex binding/unbinding (Kosterlitz-Thouless type) can also play a role, especially near certain parameter regimes.
	
	\item In 1D:\cite{Giamarchi2004}  
	The system can be described by the Tomonaga-Luttinger liquid formalism in the superfluid phase, and a Bose-Mott transition occurs with even stronger quantum fluctuations.  RG treatments typically yield larger deviations from mean-field \(t_c\).
\end{itemize}

In all cases, quantum Monte Carlo simulations \cite{Scalettar1991,Sansone2008,Sandvik2010} confirm that the RG-corrected critical lines provide an excellent match to numerical data, whereas pure mean-field predictions can differ significantly (especially in low \(d\)).

\section{Hamiltonian of Quasi-harmonic Oscillators in a Crystalline Lattice}
\label{AppendixHamiltonianQuasiHarmonicOscillators}

In a crystalline lattice, atoms are bound in potential wells but are not static; they vibrate about their equilibrium positions \cite{Ashcroft1976,Landau1980,Kittel2005}.
We will demonstrate that near the melting point, atoms can be considered as three-dimensional quasi-harmonic oscillators by incorporating the influences of anharmonicity, phonon interactions, and thermal expansion into the angular frequency of atomic vibrations and the activation energy of hopping motions.

The Hamiltonian \( H \) of a crystalline lattice of \( N \) atoms can be written as

\begin{equation}\label{GeneralHamiltonian}
	H = T + V.
\end{equation}

The kinetic energy \( T \) is given by
\[
T = \sum_{i=1}^{N} \frac{\mathbf{p}_i^2}{2 m_i},
\]
with \( \mathbf{p}_i \) being the momentum and \( m_i \) the mass of the \( i \)-th atom.

The potential energy \( V \) can be expanded as a Taylor series around the equilibrium positions:

\begin{equation}\label{PotentialEnergy1}
	\begin{aligned}
		V &= V_0 + \sum_{i} \left( \frac{\partial V}{\partial \mathbf{u}_i} \bigg|_{\mathbf{u}=0} \mathbf{u}_i \right) + \frac{1}{2} \sum_{i,j} \left( \frac{\partial^2 V}{\partial \mathbf{u}_i \partial \mathbf{u}_j} \bigg|_{\mathbf{u}=0} \mathbf{u}_i \mathbf{u}_j \right) \\
		&\quad + \frac{1}{3!} \sum_{i,j,k} \left( \frac{\partial^3 V}{\partial \mathbf{u}_i \partial \mathbf{u}_j \partial \mathbf{u}_k} \bigg|_{\mathbf{u}=0} \mathbf{u}_i \mathbf{u}_j \mathbf{u}_k \right) + \dots,
	\end{aligned}
\end{equation}
where
\[
\mathbf{u}_i = \mathbf{r}_i - \mathbf{R}_i^0
\]
is the displacement of the \( i \)-th atom from its equilibrium position \( \mathbf{R}_i^0 \), and \( \mathbf{r}_i \) is its actual position.

In Eq.~(\ref{PotentialEnergy1}), the first derivative term vanishes at equilibrium, the second derivative term corresponds to the harmonic potential, and the third and higher-order derivatives introduce anharmonicity.

\subsection{Anharmonicity}

Anharmonic effects modify the effective stiffness of the bonds between atoms as temperature increases. This change in stiffness is reflected in the temperature-dependent force constants \( U_{ij}(T) \). Instead of explicitly including the anharmonic terms, the quasi-harmonic approximation effectively incorporates anharmonicity by making the harmonic force constants temperature-dependent:
\begin{equation}\label{AnharmonicForceConstants}
	U_{ij}(T) = U_{ij}^0 + \Delta U_{ij}(T),
\end{equation}
where \( U_{ij}^0 \) are the harmonic force constants at absolute zero or a reference temperature, and \( \Delta U_{ij}(T) \) captures the effects of anharmonicity as a function of temperature.

By adjusting \( U_{ij}(T) \), the phonon frequencies \( \omega_k(T) \) become temperature-dependent:
\[
\omega_k^2(T) = \frac{1}{m_k} \sum_{i,j} U_{ij}(T) e^{-i \mathbf{k} \cdot (\mathbf{R}_i - \mathbf{R}_j)}.
\]

Recall that the free energy \( F_{\text{vib}}(T) \) depends on \( \omega_k(T) \), and the entropy \( S = - \left( \frac{\partial F_{\text{vib}} }{\partial T} \right)_V \). Therefore, the entropy \( S \) changes due to the temperature dependence of \( U_{ij}(T) \), incorporating phonon-phonon interactions and entropy changes.

\subsection{Thermal Expansion}

As temperature increases, atoms vibrate with larger amplitudes due to increased thermal energy. Anharmonicity causes the potential energy wells to be asymmetric, leading to a net shift in the average positions of atoms (thermal expansion). This effect can be modeled by allowing the equilibrium positions to be temperature-dependent.

Let \( \Delta \mathbf{R}_i(T) \) be the temperature-dependent shift from the original equilibrium position \( \mathbf{R}_i^0 \).
The new equilibrium position becomes
\[
\mathbf{R}_i(T) = \mathbf{R}_i^0 + \Delta \mathbf{R}_i(T).
\]

The modified displacements are then
\[
\mathbf{u}_i = \mathbf{r}_i - \mathbf{R}_i(T).
\]

\subsection{Final Quasi-harmonic Hamiltonian}

Substituting Eq.~(\ref{AnharmonicForceConstants}) and the modified displacements into Eq.~(\ref{GeneralHamiltonian}), we obtain the final quasi-harmonic Hamiltonian:
\begin{equation}\label{QuasiHarmonicHamiltonian}
	H_{\text{QH}} = \sum_{i=1}^N \left( \frac{\mathbf{p}_i^2}{2m_i} \right) + \frac{1}{2} \sum_{i,j=1}^N U_{ij}(T) \mathbf{u}_i \mathbf{u}_j.
\end{equation}

This Hamiltonian serves as the foundation for studying the thermodynamic and vibrational properties of crystalline solids at high temperatures, where anharmonic effects cannot be neglected.

To facilitate further calculations, we simplify Eq.~(\ref{QuasiHarmonicHamiltonian}) by diagonalizing \( U_{ij}(T) \) using normal mode coordinates \( Q_k \) and corresponding frequencies \( \omega_k(T) \):
\[
\mathbf{u}_i = \sum_{k} e_{ik} Q_k,
\]
where \( e_{ik} \) are the components of the eigenvector corresponding to mode \( k \).

Expressing the Hamiltonian in normal mode coordinates, we have
\begin{equation}\label{HamiltonianNormalModeCoordinates}
	H_{\text{QH}} = \sum_{k} \left( \frac{P_k^2}{2m_k} + \frac{1}{2} m_k \omega_k^2(T) Q_k^2 \right),
\end{equation}
where \( P_k \) is the momentum conjugate to \( Q_k \), and \( m_k \) is the effective mass for mode \( k \).

The angular frequency \( \omega_k \) of each oscillator is related to its spring constant \( k_k \) and mass by \( \omega_k = \sqrt{ \frac{k_k}{m_k}} \).
Thus, the spring constant can be expressed as
\[
k_k = m_k \omega_k^2.
\]

Eq.~(\ref{HamiltonianNormalModeCoordinates}) resembles the Hamiltonian of harmonic oscillators; thus, we refer to them as quasi-harmonic oscillators.
Each quasi-harmonic oscillator is decoupled and therefore independent, allowing us to treat each one separately.

\bibliography{references}

\end{document}